\definecolor{darkgreen}{RGB}{0,100,0}
\definecolor{orange}{RGB}{200,100,0}
\ifdefined\DEBUG{}
\newcommand{\mes}[1]{\textcolor{orange}{#1}}
\newcommand{\sumi}[1]{\textcolor{red}{#1}}
\newcommand{\mes}[1]{#1}
\newcommand{\sumi}[1]{#1}
\newcommand{\OO}{\mathcal{O}}
\newcommand{\eps}{\varepsilon}
\newcommand{\ie}{\textit{i.e.}}
\newcommand{\eg}{\textit{e.g.}}
\newcommand{\opt}{\mathrm{OPT}}
\newcommand{\bset}{\mathcal{B}}
\newcommand{\lleaf}{{\rm lleaf}}
\newcommand{\rleaf}{{\rm rleaf}}
\newtheorem{observation}{Observation}
\begin{document}

\title{PTAS for Steiner Tree on Map Graphs\thanks{The first three authors were supported by the NCN grant number 2015/18/E/ST6/00456}}

\author{Jarosław Byrka\inst{1}\orcidID{0000-0002-3387-0913} \and
Mateusz Lewandowski\inst{1}\orcidID{0000-0003-2912-099X} \and
Syed Mohammad Meesum\inst{1}\orcidID{0000-0002-1771-403X} \and
Joachim Spoerhase\inst{2}\orcidID{0000-0002-2601-6452} \and
Sumedha Uniyal\inst{2}\orcidID{0000-0002-3999-7827}}
\institute{Institute of Computer Science, University of Wrocław, Poland
\and
Aalto University, Espoo, Finland
}

\authorrunning{J. Byrka et al.}

\maketitle

\begin{abstract}

We study the Steiner tree problem on map graphs, which substantially generalize planar graphs as they allow arbitrarily large cliques. We obtain a PTAS for Steiner tree on map graphs, which builds on the result for planar edge weighted instances of Borradaile et al.  

The Steiner tree problem on map graphs can be casted as a special case of the planar node-weighted Steiner tree problem, for which only a $2.4$-approximation is known. We prove and use a contraction decomposition theorem for planar node weighted instances.
This readily reduces the problem of finding a PTAS for planar node-weighted Steiner tree to finding a spanner, $\ie$, a constant-factor approximation containing a nearly optimum solution.
Finally, we pin-point places where known techniques for constructing such spanner fail on node weighted instances and further progress requires new ideas.

\end{abstract}


\section{Introduction}
The Steiner tree problem has been recognized by both theorists and practitioners as one of the most fundamental problems in combinatorial optimization and network design. In this classical NP-hard problem, given a graph $G=(V,E)$ and a set of terminals $R$ the goal is to find a tree connecting all the terminals of minimum cost. The long sequence of papers established the current best approximation ratio of $1.386$~\cite{BGRS13}.

The node-weighted Steiner tree problem (\textsc{nwst}) is a generalization of the above problem. This can be easily seen by placing additional vertices in the middle of edges. Moreover, an easy reduction shows that this variant is as difficult to approximate as the Set Cover problem. Indeed, there are greedy $O\left(\log n\right)$ approximation algorithms \cite{KR95,GMNS99} matching this lower bound.

Much research has been devoted to studying combinatorial optimization problems on planar graphs, $\ie$ graphs that can be drawn on a plane without crossings. This natural restriction allows for better results, especially in terms of approximation algorithms. To this end, multiple techniques have been developed using the structural properties of planar graphs, including balanced separators \cite{LT80,AGKKW98,BFH19}, bidimensionality \cite{DH08}, local search \cite{CG15,CKM19}, shifting technique \cite{Baker94}. Such techniques are immediately applicable to a wide range of problems.

The Steiner problems, however, require more involved construction. The already established framework for approximation schemes for Steiner problems on planar graphs is briefly as follows:
\begin{enumerate}
  \item \textbf{Construct a spanner}

    Spanner is a subgraph of the input graph satisfying two properties: (1) total cost of the spanner is at most $f(\epsilon)$ times the cost of the optimum solution and (2) the spanner preserves nearly-optimum solution. Planarity of the input graph is heavily used to find such spanner.
  \item \textbf{Apply contraction decomposition theorem}

    The edges of the spanner are partitioned into $k$ sets, such that contracting each set results in a graph of constant treewidth. Because we started with a cheap spanner, \mes{there is a choice of $k$ for which} the cheapest such set of edges has cost $\eps \cdot \opt$. This partitioning is given by a contraction decomposition theorem~\cite{Klein08} (also known as \textit{thinning}) which can be obtained by applying the Baker's shifting technique~\cite{Baker94} to the dual graph.
  \item \textbf{Solve bounded-treewidth instances}

    The remaining instance is solved exactly (or in some cases approximately) in polynomial time via dynamic programming.
\end{enumerate}

Indeed, the PTAS construction for the Steiner tree problem due to Borradaile, Klein and Mathieu~\cite{bkm09} uses exactly this framework. The follow-up results for other problems like Steiner forest~\cite{BHM11,EKM12}, prize-collecting Steiner tree~\cite{BCEHKM11}, group Steiner tree~\cite{BDHM16} successfully follow the same approach (although adding new important ingredients like spanner bootstrapping).

On the other hand --- despite many efforts --- the status of the node-weighted Steiner tree problem is not yet decided on planar graphs. The state-of-the-art algorithms achieve only constant factor approximations. A GW-like primal-dual method gives a ratio of $6$~\cite{DHK14}, which was further simplified and improved to 3 by Moldenhauer~\cite{Moldenhauer13}.
The current best result is a more involved $2.4$-approximation by Berman and Yaroslavtsev~\cite{BermanY12}. However, as the integrality gap of the LP used by the above primal-dual algorithms is lower-bounded by $2$, such approach does not appear to lead to an approximation scheme.

\subsection{Motivation for Map graphs}

\begin{figure}[t]
\centering
\begin{subfigure}{.33\textwidth}
  \centering
  \includegraphics[width=.9\linewidth]{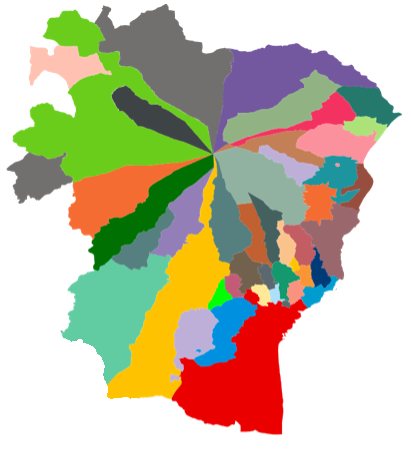}
  \caption{Regions}
  \label{fig:sub1}
\end{subfigure}%
\begin{subfigure}{.33\textwidth}
  \centering
  \includegraphics[width=.9\linewidth]{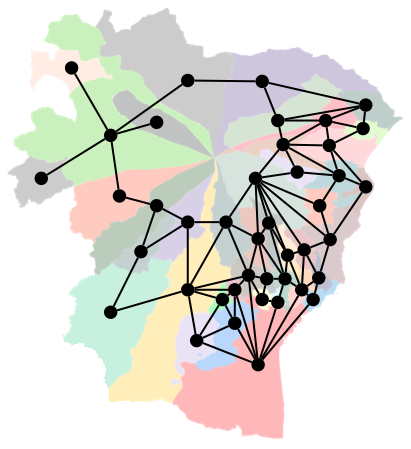}
  \caption{Planar graph}
  \label{fig:sub2}
\end{subfigure}
\begin{subfigure}{.33\textwidth}
  \centering
  \includegraphics[width=.9\linewidth]{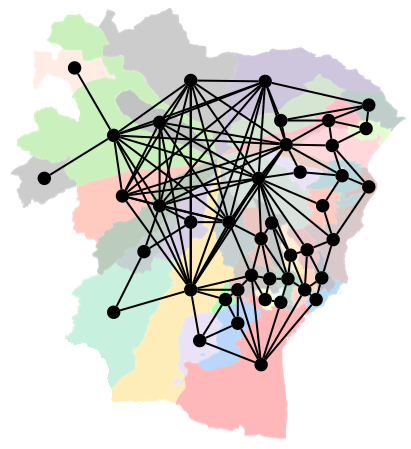}
  \caption{Map graph}
  \label{fig:sub3}
\end{subfigure}
\caption{(a) Some municipalities of the province of Catania (Sicily, Italy). The vertices are representing connected regions. (b) The planar graph has edges between two regions if they share a border. (c) The map graph has edges whenever regions share at least a single point.}
\label{fig:map-graph}
\end{figure}

The problems tractable on planar graphs are often considered also in more general classes of graphs. Most common such classes include bounded genus graphs and even more general H-minor-free graphs. In this work however, we focus on a different generalization, $\ie$ map graphs introduced by Chen et al.~\cite{CGP02}. They are defined as intersection graphs of internally disjoint connected regions in the plane. Unlike for planar graphs, two regions are adjacent if they share at least one point (see Figure~\ref{fig:map-graph}). Notably, map graphs are not H-minor-free as they may contain arbitrarily large cliques as minors.

It is useful to characterize map graphs as \textit{half-squares} of bipartite planar graphs. A half-square of a bipartite graph $W=(V \cup U, E_W)$ is a graph $G=(V,E)$ where we have an edge between a pair of vertices, whenever the distance between these vertices in $W$ is equal to two. If $W$ is planar, then it is called a witness graph of map graph $G$. See Figure~\ref{fig:witness} for a witness graph (solid edges) and the corresponding map graph (dashed edges).

\begin{wrapfigure}{R}{0.3\textwidth}
\centering
\includegraphics[width=0.28\textwidth]{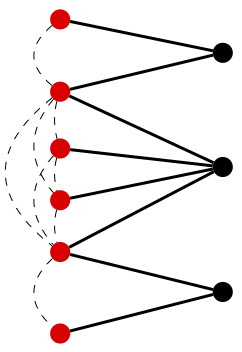}
\caption{Map graph and its witness}
\label{fig:witness}
\end{wrapfigure}

We are the first to study the Steiner tree problem on  map graphs. We study the case when all edges have uniform cost; otherwise the map graphs would capture the general case. To see this, observe that a clique $K_n$ is a map graph and putting large costs on some edges mimics any arbitrary graph.

On the other hand, the case of map graphs with uniform edge costs is still more general than arbitrary edge-weighted planar graphs for the Steiner tree problem. This follows from the fact that subdividing edges preserves planarity, and we can reduce planar graphs to the uniform case.

Therefore it is natural to ask if there is a PTAS for the Steiner tree in our setting. This question gets even more compelling upon realizing, that this is a special case of node-weighted problem on planar graphs. To see this, consider the following reduction: take the witness graph $W=(V \cup U, E_W)$ of the uniformly edge-weighted map graph and put weight $1$ on the vertices in $V$ and weight $0$ on the vertices in $U$. The terminals are kept at the corresponding vertices in $V$. The solutions for the resulting node-weighted problem can be easily translated back to the initial instance. The validity of the reduction is assured by a simple fact; that the number of vertices in a tree is equal to the number of edges in this tree plus one.
The structure of instances arising from this  reduction is very special and is captured in the definition below.

\begin{definition}
The node-weighted Steiner tree instance is \textbf{map-weighted} if it is a bipartite planar graph with weight $1$ on the left side and weight $0$ on the right side. Moreover, the terminals are required to lie on the left side.
\end{definition}

\subsection{Our results}

We study the node-weighted Steiner tree problem on planar graphs and give a PTAS for the special case of map-weighted instances.
\begin{theorem}
\label{thm:main-theorem}
  There is a polynomial-time approximation scheme for node-weighted Steiner tree problem on map-weighted instances.
\end{theorem}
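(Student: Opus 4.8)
The plan is to follow the three-step planar Steiner framework described in the introduction, adapted to map-weighted instances. The key point is that on map-weighted instances the missing ingredient for general planar \textsc{nwst} --- a spanner --- \emph{is} available, because a map-weighted instance is essentially an edge-weighted planar Steiner tree instance in disguise. Concretely, I would first argue a structural dictionary: given the bipartite planar graph $W = (V \cup U, E_W)$ with unit weights on $V$, terminals in $V$, and zero weights on $U$, I contract a spanning star at each degree-considerations node of $U$ (or rather, treat each $u \in U$ together with its incident edges as a ``super-edge'' / clique gadget). More carefully, every node-weighted Steiner tree $T$ in $W$ has cost equal to $|V(T) \cap V|$, which by the tree identity equals (number of edges of $T$) minus (number of internal-or-leaf nodes of $T$ in $U$) plus a lower-order term; the cleanest route is to map $W$ to the underlying map graph $G$ on vertex set $V$ where each $u$ of degree $d$ contributes a clique, put unit cost on every clique edge, and observe that a Steiner tree in $G$ using $k$ edges corresponds to a connected subgraph of $W$ touching $k+1$ vertices of $V$. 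Thus up to an additive $1$, \textsc{nwst} on the map-weighted instance equals uniform-edge-weight Steiner tree on $G$, and $G$ is a map graph. The first task, then, is to make this correspondence precise and two-directional (a solution on one side yields a solution of the same cost up to $\pm 1$ on the other).

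Second, I would invoke the planar edge-weighted machinery of Borradaile--Klein--Mathieu~\cite{bkm09}. Since $G$ is a map graph and \emph{not} planar, I cannot apply their spanner construction to $G$ directly; instead I work in the planar witness graph $W$ but ``charge'' using the edge-weighted view. The natural approach is: prove a contraction decomposition theorem for \emph{planar node-weighted} (specifically map-weighted) instances --- the excerpt announces exactly such a theorem --- which partitions the edges of a cheap subgraph of $W$ into $k$ classes so that contracting any one class yields bounded treewidth, and for a suitable $k$ the cheapest class has node-cost at most $\eps\cdot\opt$. To feed this, I need a map-weighted \emph{spanner}: a subgraph $H$ of $W$ with node-weight $O_\eps(1)\cdot\opt$ that preserves a $(1+\eps)$-approximate solution. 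I obtain $H$ by running the BKM edge-weighted spanner algorithm on the map graph $G$ with uniform edge weights (this is legitimate: $G$ arises from a genuine planar structure via half-squaring, and BKM's construction applies to planar edge-weighted instances, so I should instead planarize --- e.g. subdivide each clique via its witness node $u$, which recovers a planar edge-weighted instance equivalent to $G$ and to which BKM applies verbatim), then pull the resulting spanner back through the witness correspondence to a subgraph of $W$ whose node-weight is controlled by the edge-weight of the BKM spanner.

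Third, with the spanner $H \subseteq W$ and the contraction decomposition in hand, I pick $k = \Theta(1/\eps)$, contract the cheapest edge class (losing node-cost $\le \eps\cdot\opt$ from the budget, charged against the preserved near-optimal solution), and solve the resulting bounded-treewidth node-weighted Steiner tree instance optimally by dynamic programming over a tree decomposition --- standard, since \textsc{nwst} on bounded treewidth is polynomial-time solvable. Combining the $(1+\eps)$ loss from the spanner, the additive $1$ from the map/edge-weight dictionary (negligible since $\opt \ge 1$), and the $\eps\cdot\opt$ loss from the contracted class yields a $(1+O(\eps))$-approximation, which after rescaling $\eps$ gives the PTAS.

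The main obstacle I anticipate is the second step: establishing that the edge-weighted spanner of BKM can be transported to a genuine \emph{node-weighted} spanner on the witness graph with only an $O_\eps(1)$ blow-up in node-weight, and proving a contraction decomposition theorem that works with node-weights rather than edge-weights. Node-weights behave badly under contraction (contracting an edge can ``merge away'' a paid-for vertex, or fail to), so the thinning argument via Baker layers in the dual must be redone to track which nodes survive; this is precisely the place the excerpt flags as requiring care, and it is where I would spend the bulk of the effort. The map-weighted restriction (weights only $0/1$, terminals on the unit side) is what makes this tractable, since the unit-weight side corresponds exactly to the ``primal vertices'' of the half-square and the zero-weight side can be freely contracted.
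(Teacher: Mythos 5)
Your three-step skeleton (spanner, contraction decomposition, bounded-treewidth DP) matches the paper, and you correctly see that the contraction decomposition must be redone in a node-weight-compatible way. However, your spanner step has a genuine gap, and it is precisely the gap the paper spends most of its technical effort closing. You claim that subdividing each clique of $G$ by its witness node recovers a planar \emph{edge-weighted} instance equivalent to uniform-weight $G$, to which BKM applies verbatim. No such edge-weighted instance exists: a tree in the witness graph $W$ spanning a set $S\subseteq V$ has node-weight $|S|$, which equals its number of edges \emph{plus} one minus the number of $U$-nodes it uses; there is no fixed edge-weight function on $W$ whose tree costs reproduce $|V(T)\cap V|$, because the contribution of a degree-$d$ vertex depends on how many of its $d$ incident edges the tree happens to use. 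This is exactly why the paper characterizes the problem as a special case of \emph{node-weighted} planar Steiner tree rather than transporting it back to edge-weighted planar Steiner tree (and why, as the paper notes, uniform map graphs are strictly more general than edge-weighted planar graphs). Consequently the BKM spanner machinery cannot be invoked as a black box: the cut-open step must be re-analyzed (unbounded vertex degrees would blow up the duplicated boundary cost, and the paper bounds it by counting degree-1 leaves, which all have weight one); the portal-selection argument fails for cheap brick boundaries because the node-weight of a path is only roughly half its edge-length \emph{up to an additive offset} (the paper's Lemma~\ref{lem:length-vs-cost}); and the ``bounded joining vertices'' structural lemma (Span~1) needs a new argument that crucially exploits the $0/1$ alternation on bipartite paths to bound the degree of the simplified tree. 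Your proposal does not supply any of these, and without them the shortness and spanning properties of the spanner are unproven.

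Two smaller remarks. First, on the contraction decomposition: you state the right goal (cheapest class has node-cost at most $\eps\cdot\opt$), but the property that makes the averaging argument work is missing --- namely, that each vertex's incident edges appear in at most two of the $k$ classes (the paper gets this by triangulating the dual so that all faces around a vertex lie within three consecutive BFS layers). Without such a bound, a single high-degree, high-weight vertex can be charged to all $k$ classes and the total cost of the classes is not $O(1)\cdot w(H)$. Second, your emphasis is inverted: the contraction decomposition is a relatively clean modification of Klein's argument and works for arbitrary node-weights; the spanner construction is where the map-weighted restriction is genuinely used and where ``the bulk of the effort'' actually lies.
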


By the reduction described above, we immediately obtain the PTAS for edge-weighted Steiner tree problem on uniform map graphs.

\begin{corollary}
  There is a polynomial-time approximation scheme for the Steiner tree problem on uniform map graphs.
\end{corollary}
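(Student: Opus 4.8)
The plan is to obtain the corollary as a direct consequence of Theorem~\ref{thm:main-theorem} together with the reduction sketched just before that theorem. Let $G=(V,E)$ be a uniform map graph with terminal set $R\subseteq V$. The first step is to produce a witness graph $W=(V\cup U,E_W)$, $\ie$ a planar bipartite graph whose half-square on $V$ is $G$. When the map graph comes with its planar system of regions (the natural situation in applications) this is immediate; in general one invokes the polynomial-time recognition algorithm for map graphs~\cite{CGP02}, which also outputs such a $W$. We may moreover assume $|U|=O(|V|)$: point-vertices of degree at most two can be suppressed without changing the half-square, and a planar bipartite graph all of whose $U$-vertices have degree at least three satisfies $|U|=O(|V|)$ by Euler's formula. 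Hence the reduced instance has polynomial size.

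The second step is to build the map-weighted instance: keep the graph $W$, assign weight $1$ to every vertex of $V$ and weight $0$ to every vertex of $U$, and take $R\subseteq V$ as the terminals; this is a map-weighted node-weighted Steiner tree instance in the sense of the definition above. We then run the PTAS of Theorem~\ref{thm:main-theorem} on it with accuracy parameter $\eps'=\eps/2$.

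The third step is the (routine) cost translation. A Steiner tree of $G$ with $k$ edges has $k+1$ vertices, all in $V$; replacing each edge by a length-two path through a common $U$-neighbour gives a connected subgraph of $W$ spanning $R$ of $V$-weight $k+1$, so $\opt_W\le\opt_G+1$. Conversely, if $T_W$ is a node-weighted Steiner tree of $W$ using $m$ vertices of $V$, then --- since $T_W$ is bipartite and the $V$-neighbourhood of each $U$-vertex is a clique of $G$ --- contracting the star of every $U$-vertex of $T_W$ yields a connected subgraph of $G$ on those $m$ vertices, hence a Steiner tree of $G$ with at most $m-1$ edges; so $\opt_G\le\opt_W-1$. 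Combining, $\opt_G=\opt_W-1$, and the solution returned by the PTAS, of $V$-weight at most $(1+\eps')(\opt_G+1)$, maps back to a Steiner tree of $G$ with at most $(1+\eps')(\opt_G+1)-1=(1+\eps')\opt_G+\eps'\le(1+\eps)\opt_G$ edges whenever $\opt_G\ge 1$; the case $|R|\le 1$ is trivial. Every step is polynomial, which proves the corollary.

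Since this is a corollary, there is no genuine obstacle; the only points requiring (minor) attention are those flagged above --- ensuring that a polynomially bounded witness graph is available, and absorbing the additive $+1$ arising from the vertices-versus-edges discrepancy into the multiplicative guarantee by rescaling $\eps$.
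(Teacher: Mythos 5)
Your proposal is correct and follows exactly the reduction the paper itself invokes (sketched just before Theorem~\ref{thm:main-theorem}); the paper derives the corollary in one sentence from that reduction, and you are simply fleshing it out, including the two points the paper glosses over: polynomial size of the witness graph, and absorbing the additive $+1$ (the vertices-versus-edges offset) by rescaling $\eps$. One small inaccuracy worth flagging: it is not true in general that degree-$\le 2$ $U$-vertices can be suppressed without changing the half-square, since such a vertex may be the \emph{unique} witness of an edge of $G$; the intended conclusion $|U|=O(|V|)$ nevertheless holds because one only needs to retain at most one degree-two $U$-vertex per edge of $G$, map graphs on $n$ vertices have $O(n)$ edges, and Euler's formula bounds the degree-$\ge 3$ $U$-vertices (in any case, the recognition algorithm already outputs a polynomial-size witness).
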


In the proof of Theorem~\ref{thm:main-theorem} we adopt the framework for constructing PTASes and the brick-decomposition of Borradaile et al.~\cite{bkm09}.
However, we need to tackle additional obstacles related to high-degree vertices in the node-weighted setting.

The first difficulties emerge in the Spanner construction. In the \textit{cutting-open} step, the duplication of high-degree vertices may make the cost unbounded. Another difficulty is bounding the number of  portals needed. Essentially, the presence of expensive high-degree vertices excludes the existence of nearly-optimum solution with bounded number of joining vertices. The properties of map-weighted instances allow us to overcome multiple difficulties and prove the following.
\begin{lemma}[Steiner-Tree Spanner]
  \label{thm:spanner} Given a map-weighted instance $W = (V \cup U, E_W)$ for a map graph $G$, where $R \subseteq V$ are terminal nodes, there is a polynomial time algorithm which outputs a spanner subgraph $H \subseteq W$ containing all the terminals $R$.
  \begin{enumerate}[(i)]
      \item \label{itm:shortness} (shortness property) $w(H) = f(\eps) \cdot OPT(W, R)$
      \item \label{itm:spanning} (spanning property) $OPT(H, R) \leq (1+\eps) \cdot OPT(W, R)$
  \end{enumerate}
  where $f(\eps)$ is a function which depends only on $\eps$ and $OPT(G, R)$ is the cost of an optimal Steiner tree for graph $G$ and set of terminals $R \subseteq V(G)$.
\end{lemma}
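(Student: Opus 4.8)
\emph{Proof sketch.}
The plan is to run the spanner construction of Borradaile, Klein and Mathieu~\cite{bkm09} with node weights, using the structure of map-weighted instances exactly at the two points where the naive adaptation breaks. First I would compute a constant-factor approximate Steiner tree $\widehat T$ in $W$ --- for instance by taking a constant-factor Steiner tree of the uniform map graph $G$ (via the metric MST $2$-approximation, say) and pulling it back through the reduction into $W$, or directly by a known planar node-weighted approximation. I then clean $\widehat T$ up: repeatedly delete weight-$0$ leaves, so that every $u\in U\cap V(\widehat T)$ has degree at least $2$. Since $W$ is bipartite with every edge incident to exactly one $V$-vertex, counting edges from both sides gives $|U\cap V(\widehat T)|\le |V\cap V(\widehat T)|-1$, hence $|V(\widehat T)|\le 2\,w(\widehat T)=\OO(OPT)$. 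This is the first use of map-weightedness: it controls the cutting-open step. When the plane is cut open along $\widehat T$, vertex $v$ is repeated $\deg_{\widehat T}(v)$ times, so the total node weight on the boundary is $\sum_v w_v\cdot\deg_{\widehat T}(v)=\sum_{v\in V}\deg_{\widehat T}(v)=|E(\widehat T)|\le 2\,w(\widehat T)$; in a general node-weighted instance a single heavy high-degree vertex already makes this sum unbounded relative to $w(\widehat T)$, but here all the weight sits on the $V$-side and the $U$-vertices have degree $\ge 2$.

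\textbf{Mortar graph and brick spanners.} Next I would build, exactly as in~\cite{bkm09} but tracking node weights, a planar \emph{mortar graph} $MG\supseteq\widehat T$ whose faces (\emph{bricks}) have boundary walks that are $\eps$-short, \ie\ no subwalk can be shortened by more than a $(1+\eps)$ factor through the brick; the construction only adds $\eps$-short ``columns'', and the usual charging argument together with the cutting-open bound above yields $w(MG)=f(\eps)\cdot OPT$. For each brick $B$ I place $\theta=\theta(\eps)$ \emph{portals} on $\partial B$, spaced evenly by boundary weight, and (as in~\cite{bkm09}, possibly with spanner bootstrapping) add a \emph{brick spanner} $S_B\subseteq B$ with $w(S_B)\le f(\eps)\cdot w(\partial B)$ that contains a near-optimal way of connecting any subset of the portals of $B$ inside $B$. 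The output is $H:=MG\cup\bigcup_B S_B\subseteq W$, which contains all terminals $R$ since $R\subseteq V(\widehat T)\subseteq V(MG)$.

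\textbf{Shortness.} Since $MG$ is planar, every vertex and edge of $MG$ lies on at most two brick boundaries, so $\sum_B w(\partial B)\le 2\,w(MG)$, and therefore
\[
  w(H)\;\le\;w(MG)+\sum_B w(S_B)\;\le\;w(MG)+f(\eps)\sum_B w(\partial B)\;\le\;f(\eps)\cdot w(MG)\;=\;f(\eps)\cdot OPT,
\]
which is property~(\ref{itm:shortness}).

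\textbf{Spanning property, and the main difficulty.} For property~(\ref{itm:spanning}) I would take an optimal tree $T^\ast$ and transform it into some $T'\subseteq H$ with $w(T')\le(1+\OO(\eps))\,OPT$, then rescale $\eps$. Following~\cite{bkm09}: locally clean $T^\ast$ (no weight-$0$ leaves, weight-$0$ vertices of degree $\ge 2$), push the parts of $T^\ast$ running through brick interiors onto the brick boundaries and reroute so that $T^\ast$ meets each $\partial B$ only at a bounded set of \emph{joining vertices}, slide each joining vertex to the nearest portal along $\partial B$ (whose cost is controlled by $\eps$-shortness), and finally replace the interior of each brick by the corresponding sub-forest of $S_B$. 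The step I expect to be the main obstacle --- and the second essential use of map-weightedness --- is bounding the number of joining vertices per brick by $\mathrm{poly}(1/\eps)$ and the total sliding cost by $\eps\cdot w(\partial B)$: in the general node-weighted setting this fails because a brick boundary may be forced to contain many expensive high-degree vertices through which every near-optimal solution must pass, so that no near-optimal solution has few joining vertices. Here the expensive vertices are precisely the $V$-side vertices, which after the degree cleanup are sparse along any near-optimal solution relative to its weight (again by bipartiteness), while the possibly huge-degree $U$-side vertices are free and can absorb or re-route connections at no cost --- which is exactly what the exchange argument needs. With the joining-vertex bound in place, summing the sliding cost over all bricks gives $w(T')\le(1+\OO(\eps))\,OPT$, completing the argument.
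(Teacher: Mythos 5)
Your overall architecture matches the paper's: start from a constant-factor node-weighted approximation, cut open, build the mortar graph with unit edge lengths, designate portals, and patch in near-optimal portal-connecting subgraphs inside each brick. Your cutting-open bound is a clean alternative to the paper's: you clean up $\widehat T$ so weight-$0$ vertices have degree $\ge 2$ and use the bipartite degree count $\sum_v w_v\deg_{\widehat T}(v)=|E(\widehat T)|\le 2\,w(\widehat T)$, whereas the paper partitions $V(ST)$ into leaves, degree-$2$, root, and the rest and derives $c(\bset)\le 4\,c(ST)$; both arguments hinge on the same fact (all leaves can be taken to have weight $1$) and either works.

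However, your shortness argument has a concrete gap. You assert that ``every vertex and edge of $MG$ lies on at most two brick boundaries,'' and conclude $\sum_B w(\partial B)\le 2\,w(MG)$. This is true for edges (each edge bounds exactly two faces) but false for vertices: a vertex of degree $d$ in $MG$ is incident to $d$ faces and hence can lie on up to $d$ brick boundaries, and since all the weight sits on vertices this is exactly where the sum can blow up. This is not an edge case --- it is the core obstacle that the node-weighted setting introduces, and handling it occupies most of the paper's shortness proof. The paper's argument proves the weaker statement that a vertex is a \emph{regular} (non-corner) boundary vertex of at most two bricks, then charges the remaining occurrences separately: trivial corners are charged to the number of strips (which is $O(|V(\bset)|)=O(\opt)$), and non-trivial corners are charged to $\sum_B c(W_B)+c(E_B)=O(\eps)\cdot\opt$, which requires the separate bound in Lemma~\ref{lem:cost-mortargraph}(i). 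Without this case split the inequality $\sum_B w(\partial B)\le 2\,w(MG)$ simply does not hold. A second, smaller, issue you do not address: because $c(P)\approx\ell(P)/2+O(1)$ carries an additive offset, the greedy portal-spacing argument of~\cite{bkm09} fails on bricks with cheap boundary; the paper handles this by taking \emph{all} boundary vertices as portals when $|V(\partial B)|\le 3\theta$ and only invoking the greedy spacing otherwise. Your sketch of the spanning property correctly locates the key difficulty (bounding joining vertices, exploiting the free $U$-side vertices), but as written it is only an outline of the Span~$0$/$1$/$2$ machinery.
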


A different trouble comes up in the use of Contraction Decomposition Theorem with node-weights.
A naive approach could be to move the costs of vertices to edges by setting the cost of each edge to be the sum of costs of its endpoints and then using the contraction decomposition theorem as it is.
However --- again due to high-degree nodes --- the total cost of edges would no longer be a constant approximation of OPT. Therefore we cannot directly use the existing contraction decomposition theorem. 

To handle the last issue, we develop a new decomposition theorem with the additional property that each vertex participates in a limited number of sets.

\begin{lemma}[Node-weighted Contraction Decomposition]
\label{thm:contraction-decomposition}
  There is a polynomial time algorithm that given a planar embedding of a graph $G$ and an integer $k$, finds $k$ sets $E_0, E_1, \dots, E_{k-1} \subseteq E(G)$ such that:
  \begin{enumerate}[(i)]
    \item contracting each $E_i$ results in a graph with treewidth $O(k)$, and
    \item for each vertex $v$, all the incident edges of $v$ are in at most two sets $E_{i}, E_{j}$.
  \end{enumerate}
\end{lemma}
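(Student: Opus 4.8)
The plan is to follow the classical contraction--decomposition construction (a BFS layering of a planar auxiliary graph, with classes taken modulo $k$), but to route the layering through the \emph{radial graph} and to charge every edge to its \emph{lower incident face}. This particular edge--to--class assignment is what will make property~(ii) fall out immediately, while property~(i) will follow from the standard bounded--width (annular) argument behind the usual contraction decomposition~\cite{Klein08}.

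First I would form the radial (vertex--face incidence) graph $R$ of the given embedding: $V(R)=V(G)\cup F(G)$, with an $R$-edge for every incidence between a vertex and a face; $R$ is planar. Run BFS in $R$ from an arbitrary root face and let $\ell(\cdot)$ be the BFS level, so faces receive even levels and vertices odd levels (do this per connected component of $G$, on disjoint level ranges, if $G$ is disconnected). For an edge $e$ of $G$ with the two incident faces $f,f'$ put $m(e):=\min\{\ell(f),\ell(f')\}$ (an even number) and define
\[
c(e):=\bigl(m(e)/2\bigr)\bmod k,\qquad E_i:=\{\,e\in E(G):c(e)=i\,\}.
\]
This is computable in polynomial time and the $E_i$ partition $E(G)$.

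Property~(ii) then needs only one observation. Every face incident to a vertex $v$ is an $R$-neighbor of $v$, hence has level in $\{\ell(v)-1,\ell(v)+1\}$ (the BFS parent of $v$ realises the value $\ell(v)-1$, BFS forbids anything above $\ell(v)+1$, and parity excludes $\ell(v)$ itself). Consequently $m(e)\in\{\ell(v)-1,\ell(v)+1\}$ for every edge $e$ incident to $v$, so, writing $\ell(v)=2a+1$, all such edges satisfy $c(e)\in\{\,a\bmod k,\;(a+1)\bmod k\,\}$ --- at most two sets, as required.

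For property~(i) I would invoke the classical analysis. Because the two faces sharing a $G$-edge lie on a $4$-cycle of $R$, their levels differ by $0$ or $2$; hence for any level $L\equiv 2i\pmod{2k}$ the region $D_L$ formed by the faces of level $\le L$ is bounded in $G$ exactly by edges $e$ with $m(e)=L$, all of which lie in $E_i$. Contracting $E_i$ therefore pinches the plane along the nested boundary curves of $D_{2i},D_{2i+2k},D_{2i+4k},\dots$; between two consecutive pinches the graph lives inside $O(k)$ consecutive $R$-levels, a subgraph of treewidth $O(k)$, and the pieces are joined only along the contracted pinch curves. Turning this picture into the bound $\mathrm{tw}(G/E_i)=O(k)$ is precisely the bounded--width bookkeeping of the classical contraction--decomposition theorem~\cite{Klein08,Baker94}. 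I expect \emph{this} to be the main obstacle: one must verify that passing to the radial graph and charging each edge to its lower face does not disturb the nested--separator / annular structure the classical proof relies on --- in particular that each $D_L$, even when it fails to be simply connected (so that a pinch is a union of several separating cycles), still yields pieces of $R$-width $O(k)$ that glue back to treewidth $O(k)$. This should be routine along the template of~\cite{Klein08}, the genuinely new ingredient being only the class assignment engineered to give~(ii).
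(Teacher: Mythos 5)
Your construction takes a genuinely different route from the paper's while achieving the same lemma, so a comparison is worthwhile. You run BFS in the radial graph $R$ and charge each $G$-edge to the minimum level of its two incident faces, reducing that (even) value by $2k$; the paper instead triangulates the dual $H^*$ by inserting one center vertex per $H$-vertex together with all radial incidence edges to form $J^*$, runs BFS in $J^*$, and assigns an $H$-edge to $E_i$ precisely when its dual edge spans two consecutive $J^*$-levels $L_p,L_{p+1}$ with $p\equiv i\pmod k$. The two BFS layerings are not the same (the dual edges in $J^*$ act as shortcuts that $R$ lacks), so your $E_i$ are different sets, and in fact yours partition $E(G)$ while the paper's need not. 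For property~(ii) your argument is the cleaner of the two: bipartiteness of $R$ pins every face around $v$ to level exactly $\ell(v)\pm1$, giving the two classes $a\bmod k$ and $(a+1)\bmod k$ directly, whereas the paper lets the faces around $v$ spread over three consecutive $J^*$-levels (distance $2$ via the center vertex) and then observes that the level-spanning dual edges touch only two class indices. For property~(i), however, the paper has the tidier reduction: it invokes Klein's theorem as a black box on $J$, the primal of $J^*$, obtaining $\mathrm{tw}=O(k)$ after contracting the $J$-edges corresponding to $E^*_i$, and then contracts the remaining artificial edges, a minor-taking step that cannot increase treewidth and lands exactly on $H_{/E_i}$. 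You sketch the nested annular-pinch picture for your $D_L$ regions, which is the right structure, but you explicitly leave the treewidth bookkeeping to ``the classical template.'' That gap is real, though you have correctly located it; it is fillable either by redoing the annular argument for the radial BFS with your lower-face charging rule, or, perhaps more economically, by mimicking the paper's trick of adding the dual edges back to get $J^*$ and then applying Klein to its primal, thereby inheriting the treewidth bound instead of re-deriving it.
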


We note that our decomposition can be applied to any node-weighted \textit{contraction-closed} problem, $\ie$ the problem for which contracting edges and setting the weight of resulting vertex to $0$ does not increase the value of optimum solution.
Therefore the lemma above adds a novel technique to the existing framework for planar approximation schemes.

Finally, using standard techniques, we give a dynamic programming algorithm for the node-weighted Steiner tree problem on bounded treewidth instances (see Appendix~\ref{sec:DP}).

\begin{lemma}[Bounded Treewidth NWST]
\label{lem:bounded_tw_nwst}
An optimal node-weighted Steiner tree can be found in time $2^{\OO(t\log t)}\cdot n^{\OO(1)}$, where $t$ is the treewidth of the input graph with $n$ vertices.
\end{lemma}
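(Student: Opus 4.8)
The plan is to run a standard dynamic program over a tree decomposition, where the state at each bag records which bag vertices are used by the partial solution and how these vertices are grouped into the connected components of the partial Steiner forest constructed so far.

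Since the statement is phrased in terms of the treewidth $t$ rather than a given decomposition, I would first invoke a known constant-factor treewidth approximation running in single-exponential time, which produces a tree decomposition of width $\OO(t)$ in time $2^{\OO(t)}\cdot n^{\OO(1)}$ — well within the claimed bound — and then convert it into a \emph{nice} tree decomposition (leaf, introduce-vertex, introduce-edge, forget, and join nodes) in polynomial time, rooted at a bag containing an arbitrarily chosen terminal $r\in R$. For a node $x$ with bag $B_x$ and $V_x$ the union of the bags in the subtree below $x$, I would define $c[x,S,\mathcal{P}]$, for every $S\subseteq B_x$ and every partition $\mathcal{P}$ of $S$, as the minimum node weight $w(F)=\sum_{v\in V(F)}w(v)$ over subgraphs $F$ of $G[V_x]$ that contain every terminal of $V_x$, satisfy $V(F)\cap B_x=S$, have every connected component meeting $S$, and induce on $S$ exactly $\mathcal{P}$ (two vertices of $S$ lie in one block iff they lie in one component of $F$). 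Because we only need to track connectivity and weights are nonnegative, there is no need to forbid cycles in $F$, which keeps the recurrences simple. To avoid charging a vertex more than once across the many bags containing it, I would adopt the convention that the weight of a vertex is paid exactly once — at the unique node introducing it, whenever it is selected — so that $c[x,S,\mathcal{P}]$ always already includes $w(S)$.

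The recurrences at leaf, introduce-vertex, introduce-edge, and forget nodes are routine case checks. The only interesting one is the join node: for a node $x$ with children $x_1,x_2$ and $B_x=B_{x_1}=B_{x_2}$,
\[
 c[x,S,\mathcal{P}]=\min\bigl\{\, c[x_1,S,\mathcal{P}_1]+c[x_2,S,\mathcal{P}_2]-w(S)\;:\;\mathcal{P}_1\sqcup\mathcal{P}_2=\mathcal{P}\,\bigr\},
\]
where $\mathcal{P}_1\sqcup\mathcal{P}_2$ is the join in the partition lattice (repeatedly merge blocks sharing a vertex) and the $-w(S)$ term corrects the double counting of the shared bag. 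Correctness follows by induction on the decomposition, using the standard fact that a tree decomposition separates $V_x\setminus B_x$ from the rest of the graph along $B_x$, so that any partial solution decomposes consistently with these states. The optimum is read off as $c[\text{root},\{r\},\{\{r\}\}]$ by the choice of the root bag.

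For the running time, a bag has $\OO(t)$ vertices, so there are at most $2^{\OO(t)}$ choices of $S$ and at most $t^{\OO(t)}$ partitions of $S$ (a Bell number), giving $2^{\OO(t\log t)}$ states per node. All non-join recurrences are computed in time polynomial in the table size; a join recurrence ranges over pairs of partitions, costing $2^{\OO(t\log t)}$ per state. Hence the total time is $2^{\OO(t\log t)}\cdot n^{\OO(1)}$. This lemma is essentially folklore, so there is no deep obstacle; the only points requiring care are the node-weight accounting (each vertex charged exactly once despite appearing in many bags) and the verification that the partition component of the state faithfully encodes connectivity at join nodes. Both are handled by the introduce-node charging convention and the partition-join rule above; the full case analysis is deferred to Appendix~\ref{sec:DP}.
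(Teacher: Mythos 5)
Your proposal is essentially the same dynamic program as the paper's Appendix~\ref{sec:DP}: state the table by a subset $S$ of the bag and a partition $\mathcal{P}$ of $S$ into the connected components of the partial solution, pay each vertex exactly once (the paper at the introduce node, you likewise), and observe that the $t^{\OO(t)}$ bound on the number of partitions dominates the running time. The one substantive difference is the join-node recurrence, and here your version is the more careful one: you take
\[
 c[x,S,\mathcal{P}]=\min\bigl\{c[x_1,S,\mathcal{P}_1]+c[x_2,S,\mathcal{P}_2]-w(S)\;:\;\mathcal{P}_1\sqcup\mathcal{P}_2=\mathcal{P}\bigr\},
\]
while the paper writes $C(a,S,P)=C(b,S,P)+C(c,S,P)-w(S)$, i.e.\ it restricts to $\mathcal{P}_1=\mathcal{P}_2=\mathcal{P}$. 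As stated, the paper's formula misses exactly the case where the partial solution in one child refines $\mathcal{P}$ and the missing connections are supplied only by the other child (e.g.\ a $4$-cycle $a,c_1,b,c_2$ with terminals $a,b$, a join bag $\{a,b\}$, and one path through $c_1$: there one child realizes $\{\{a,b\}\}$ and the other only $\{\{a\},\{b\}\}$). The standard edge-weighted DP the paper cites also takes the minimum over pairs of partitions whose join is $\mathcal{P}$, so your recurrence is the correct one and the paper's is a (fixable) typo rather than an alternative route. Two further minor points in your favor: you explicitly account for obtaining the decomposition via a single-exponential constant-factor treewidth approximation (the paper only mentions a linear-time routine for planar graphs, even though the lemma is stated for arbitrary graphs), and you correctly note that cycles need not be forbidden in the partial forests because node weights are nonnegative, which keeps the case analysis small.
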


We note that Lemma~2 and Lemma~3 work for arbitrary node-weights. Only the spanner construction of Lemma~1 uses properties of the map-weighted instances.

In the next section we give the details of the Spanner construction. In Section~\ref{sec:contraction-decomposition} we prove Lemma~\ref{thm:contraction-decomposition} and show how the combination of the three above lemmas yields the main result. In the last section we conclude with a puzzling open problem.

\section{Spanner construction for map-weighted graph}
\label{sec:spanner}

In this section we describe how we construct the spanner for a map-weighted planar witness graph $W$ and prove Lemma~\ref{thm:spanner}. For convenience, instead of $(1+\eps)$, we will prove the property~(\ref{itm:spanning}) for $(1+c\eps)$ where $c\geq0$ is some fixed constant. For any given $\eps>0$, running the construction for $\Tilde{\eps} = \eps/c$ gives the precise result. From now on, we will work with a fixed embedding of the witness graph $W$.

\paragraph{Notations:} For any map-weighted graph $W$, we define $d_W \colon V^2 \rightarrow \mathbb{R}$ to be the function giving the node-weighted length of the shortest-path between any two vertices using only the edges from $W$ (including the end vertices weights). Let $P_W(u, v) \subseteq W$ be an arbitrary path of cost $d_W(u,v)$. Similarly, let $\ell_W \colon V^2 \rightarrow \mathbb{R}$ be the length of the unweighted shortest-path ignoring the node-weights between any two vertices using only the edges from $W$. Similarly we define for any path $P \subseteq W$, $c(P)$ to be the cost of the path corresponding to the map-weights (including the end vertices) and $\ell(P)$ to be the length of the unweighted-path ignoring the node-weights. Analogously, for any graph $H \subseteq W$, we define $c(H)$ to be the total weight of nodes of $H$ and $\ell(H)$ to be the number of edges of $H$.

For any path $P$ and $u, v \in V(P)$, we define $P[u, v]$ to be the sub-path starting at $u$ and ending at $v$ (including $u$ and $v$) and $P(u, v)$ to be the sub-path starting at $u$ and ending at $v$ (excluding $u$ and $v$). We refer to any path/cycle with no edges and one vertex as {\em singleton} path/cycle and the ones containing at least one edge as {\em non-singleton} path/cycle respectively.

The spanner construction is summarized in Algorithm~\ref{alg-spanner}.

\begin{algorithm}[ht!]
  \caption{Spanner construction}
  \label{alg-spanner}
  \begin{algorithmic}[1]
    \State Start with a $2.4$-approximate node-weighted Steiner tree solution for graph $W$ and terminal set $R$ using~\cite{BermanY12}.
    \State Cut open the corresponding solution tree $ST$ in $W$ to create another graph $W'$ which has an outerface with boundary $\bset$ of cost at most $10 \cdot OPT$.
    \State Build the Mortar graph $MG$ on the cut-open graph $W'$ using the procedure in Section~6 of~\cite{bkm09}, ignoring the weights on the nodes and using $\ell(e)=1$ for each $e \in E(W')$ and $\eps' := \eps/4$
    \State Construct the set $P(B) \subseteq \partial B$ of portals for each brick $B \in MG$.
    \State For each brick $B \subseteq MG$ and for each subset $X \subseteq P(B)$, run the generalized Dreyfus-Wagner algorithm~\cite{DW71,bwb18} to compute the optimal Steiner tree on terminal set $X$ in map-weighted graph $B$ in time $3^{|X|}n^{\OO(1)}$.
    \State Return the union of $MG$ along with all the trees found in the previous step.
  \end{algorithmic}
\end{algorithm}

\begin{figure}[!b]
\centering
\begin{subfigure}{.40\textwidth}
  \centering
  \includegraphics[width=.6\linewidth]{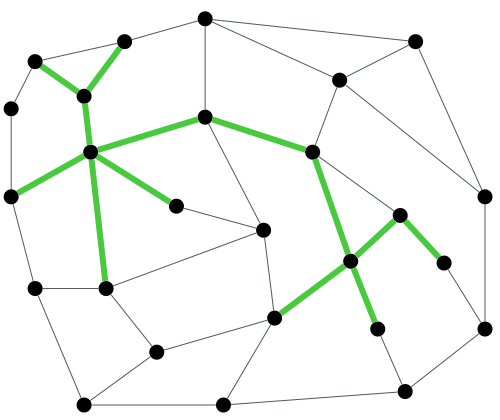}
  \label{fig:cuttingopen1}
\end{subfigure}%
\begin{subfigure}{.40\textwidth}
  \centering
  \includegraphics[width=.6\linewidth]{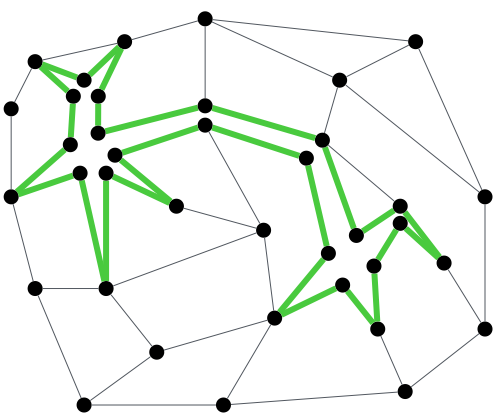}
  \label{fig:cuttingopen2}
\end{subfigure}
\caption{Cutting graph open}
\label{fig:cut-open}
\end{figure}

Before proving Lemma~\ref{thm:spanner}, we elaborate on the steps of Algorithm~\ref{alg-spanner} that require more detailed explanation and state the key properties of the construction.

\paragraph{Cutting-Open operation.} We start with a $2.4$-approximate node-weighted Steiner tree solution $ST$ for our node-weighted plane graph $W$ and terminal set $R$ using~\cite{BermanY12}. Using tree $ST$, we perform an {\em cut-open} operation as in~\cite{bkm09} (see Figure~\ref{fig:cut-open}) to create a new map-weighted planar graph $W'$ whose outer face is a simple cycle $\bset$ arising from $ST$.

Since we are dealing with node weights and the node degrees are unbounded, we need an additional argument to bound the cost of $\bset$ as compared to the edge-weighted case. A crucial property used to prove the observation is that all the leaves of $ST$ have weight one.
\begin{lemma}(Cut-Open)
\label{lem:cut-open}
The cost $c(\bset)$ of the boundary $\bset$ is at most $10 \cdot \opt$. Moreover $R \subseteq V(\bset)$.
\end{lemma}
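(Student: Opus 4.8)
The plan is to first normalise the approximate tree, then identify the cut-open boundary $\bset$ with the contour walk of that tree, and finally use bipartiteness to convert the a priori unbounded node-weighted length of the contour into the controllable quantity $\ell(ST)$. First I would preprocess the $2.4$-approximate node-weighted Steiner tree of Berman and Yaroslavtsev~\cite{BermanY12} by repeatedly deleting every leaf that is not a terminal; this keeps $R$ connected and never increases the weight, so the resulting tree $ST$ still satisfies $c(ST)\le 2.4\cdot\opt$ and spans $R$. Now every leaf of $ST$ is a terminal and hence lies in $V$, so every leaf has weight $1$ --- this is exactly the property the statement alludes to --- and equivalently every vertex of $ST$ lying in $U$ (every weight-$0$ vertex of $ST$) has degree at least $2$ in $ST$.

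Next I would invoke the cut-open construction of~\cite{bkm09}: cutting $W$ open along $ST$ yields $W'$ whose outer face is bounded by a simple cycle $\bset$ that traces the contour walk around the plane tree $ST$, so each edge of $ST$ contributes two edges to $\bset$ and each vertex $v$ of $ST$ contributes exactly $\deg_{ST}(v)$ copies to $\bset$, every copy keeping the weight $w(v)$ of $v$ (in particular copies of $V$-vertices have weight $1$ and copies of $U$-vertices weight $0$, so $W'$ is again map-weighted). Reading off the cost,
\[
  c(\bset)=\sum_{v\in V(ST)}\deg_{ST}(v)\,w(v)=\sum_{v\in V(ST)\cap V}\deg_{ST}(v)=\ell(ST),
\]
where the second equality holds because weight-$0$ vertices contribute nothing and the third because $ST$ is bipartite, so every edge has a unique endpoint in $V$.

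It then remains to bound $\ell(ST)$. Set $a:=|V(ST)\cap V|=c(ST)$ and $b:=|V(ST)\cap U|$. Since $ST$ is a tree, $\ell(ST)=a+b-1$; since every vertex of $V(ST)\cap U$ has degree at least $2$ and $\sum_{u\in V(ST)\cap U}\deg_{ST}(u)=\ell(ST)$, we get $\ell(ST)\ge 2b$ and hence $b\le a-1$. Therefore $c(\bset)=\ell(ST)=a+b-1\le 2a-2=2\,c(ST)-2\le 4.8\,\opt$, comfortably within the claimed bound $10\,\opt$. Finally $R\subseteq V(ST)$ because $ST$ is a Steiner tree for $R$, and every vertex of $ST$ has at least one copy on the contour, so $R\subseteq V(\bset)$.

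The step I expect to be the real obstacle is the one the text flags: for arbitrary node weights $c(\bset)=\sum_{v}\deg_{ST}(v)w(v)$ cannot be bounded by any function of $\opt$, since a single expensive high-degree vertex makes it explode. What rescues map-weighted instances is that all positive-weight vertices lie on the $V$-side, so $\sum_{v\in V(ST)\cap V}\deg_{ST}(v)$ counts each edge exactly once and equals $\ell(ST)$; combined with the absence of weight-$0$ leaves this recovers the inequality $\ell(ST)\le 2\,c(ST)$ --- precisely the estimate one gets for free when doubling an edge-weighted tree. Once this structural observation is in hand, the pruning step and the bookkeeping of vertex copies are routine.
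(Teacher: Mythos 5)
Your proof is correct and obtains the required bound, but it takes a cleaner and more structure-specific route than the paper's. The paper fixes a non-leaf root $r$, partitions the vertices of $ST$ into leaves, degree-$2$ vertices, and vertices of degree at least $3$, and then uses the standard tree identity relating the number of leaves to the degree excesses of high-degree internal vertices; the only map-specific facts it uses are that all weights are at most $1$ and that all leaves (after pruning) have weight $1$. This yields $c(\bset) \le 4\,c(ST) \le 9.6\,\opt$. You instead exploit bipartiteness head-on: once every leaf is a terminal, every weight-$0$ vertex of $ST$ has degree at least $2$, and summing degrees over the weight-$1$ side counts each edge exactly once, so $c(\bset) = \ell(ST)$; combined with $\ell(ST) \ge 2b$ (handshake on the $U$-side) you get $b \le a-1$ and hence $c(\bset) \le 2\,c(ST) - 2 \le 4.8\,\opt$. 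Your argument buys a tighter constant and, I would say, more transparency about \emph{why} map-weighted instances avoid the blow-up the paper warns about --- it is exactly the ``each edge has a unique weight-$1$ endpoint'' property --- at the cost of being less portable: the paper's counting would also go through for general $\{0,1\}$ node weights on a (not necessarily bipartite) planar graph as long as all leaves are weight $1$, whereas your identity $c(\bset) = \ell(ST)$ really needs the bipartite split between weight-$1$ and weight-$0$ vertices. Since only the looser $10\,\opt$ is used downstream, both proofs serve equally well.
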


\paragraph{Mortar Graph Construction} 
We apply the construction of a mortar graph along with a brick decomposition as described in~\cite{bkm09} as a black box. Here, we state the properties of the mortar graph that we need in our work without referring to the details of the algorithm which constructs it.

\begin{enumerate}[(i)]
    \item The mortar graph $MG$ is a subgraph of the cut-open graph $W'$.
    \item Let $f$ be a face of the mortar graph. A brick $B$ (corresponding to $f$) is the subgraph of $W'$ enclosed by the boundary $\partial f$ of $f$. Specifically, the boundary $\partial B$ of $B$ is precisely $\partial f$.
    \item The collection of all bricks covers the cut-open graph $W'$.
    \item The mortar graph is ``grid-like'' in the following sense. The boundary $\partial B$ of each brick $B$ can be decomposed into a western part $W_B$, a southern part $S_B$, an eastern part $E_B$ and a northern part $N_B$ (see Lemma~\ref{lem:brick-prop} below). Each of these parts is close to be a shortest path (see Definition~\ref{def:eps-shortness} below).
\end{enumerate}
The construction of the mortar graph and the corresponding brick decomposition as described in~\cite{bkm09} has two parameters. An error parameter $\eps'$ and an edge-weight function $\ell$. We invoke their construction procedure of the mortar graph as a black box using error parameter $\eps'=\eps/4$ and unit edge-weights $\ell(e)=1$ for all edges $e\in E(W')$. Note that the node weights are ignored in this construction.

In what follows, we will prove certain properties of the mortar graph $MG$ about its node weights and error parameter $\eps$ based on the fact that similar properties hold with respect to the unit edge weights and error parameter $\eps'=\eps/4$.

The following technical lemma tells us that the node weight of a path is roughly half its edge length apart from a small additive offset. It turns out convenient for the shortness properties of the spanner that this offset is the same for any two paths sharing their end nodes.

\begin{lemma}\label{lem:length-vs-cost}
Let $P,P'$ be two paths sharing both of their end points $u$ and $v$. Then the following properties hold.
\begin{enumerate}[(i)]
    \item There is $b\in\{0,-1,-2\}$ such that $\ell(P)=2c(P)+b$ and $\ell(P')=2c(P')+b$.
    \item $\ell(P)/2\leq c(P)\leq \ell(P)/2+1$
    \item $\ell_W(u,v)/2\leq d_W(u,v)\leq \ell_W(u,v)/2+1$
    \item $P$ is a shortest path under $\ell_W$ if and only if it is a shortest path under~$d_W$.
\end{enumerate}
\end{lemma}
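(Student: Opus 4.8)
The plan is to exploit the bipartite structure of a map-weighted graph. Since $W=(V\cup U,E_W)$ is bipartite with all of $V$ carrying weight $1$ and all of $U$ carrying weight $0$, every path in $W$ alternates between the two sides, and its node-weighted cost is exactly the number of its vertices that lie in $V$. I would first record this counting fact precisely: for a path $P$ with $m:=\ell(P)$ edges (hence $m+1$ vertices) and end vertices $u,v$, the number of $V$-vertices on $P$ — and thus $c(P)$ — equals $m/2+1$ if both $u,v\in V$ (so $m$ is even), equals $m/2$ if both $u,v\in U$ (so $m$ is even), and equals $(m+1)/2$ if exactly one of $u,v$ lies in $V$ (so $m$ is odd).

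For part~(i) I would set $b(u,v):=-2$ if $u,v\in V$, $b(u,v):=0$ if $u,v\in U$, and $b(u,v):=-1$ otherwise; the three cases above then give $\ell(P)=2c(P)+b(u,v)$ for \emph{every} $u$-$v$ path $P$. The key point is that $b(u,v)$ depends only on the end vertices and not on the path, because in a bipartite graph all $u$-$v$ paths have the same length parity (and the same pair of endpoint sides). Applying this to $P$ and $P'$, which share their end vertices $u$ and $v$, yields a common offset $b:=b(u,v)\in\{0,-1,-2\}$, which is~(i). Part~(ii) is then a one-line computation: $\ell(P)=2c(P)+b$ gives $c(P)=\ell(P)/2-b/2$ with $-b/2\in\{0,\tfrac12,1\}$.

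For parts~(iii) and~(iv) I would specialize to $u,v\in V$, the domain of $d_W$ and $\ell_W$, where the first of the three cases applies to every $u$-$v$ path, so $c(P)=\ell(P)/2+1$ holds \emph{with equality} for all such $P$. Minimizing over all $u$-$v$ paths — which, since the node weights are nonnegative, is where $d_W$ and $\ell_W$ are both attained — turns this identity into $d_W(u,v)=\ell_W(u,v)/2+1$, a value that in particular lies in the interval asserted in~(iii). Since $t\mapsto t/2+1$ is strictly increasing, a $u$-$v$ path minimizes $c(\cdot)$ if and only if it minimizes $\ell(\cdot)$; that is, it is a shortest path under $d_W$ iff it is a shortest path under $\ell_W$, giving~(iv).

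There is no genuine obstacle here — the lemma is a parity-and-counting statement — but two small points deserve care. First, the claim that the offset $b$ is a function of the end vertices alone is precisely where bipartiteness is used, and it should be spelled out. Second, one should check the degenerate singleton path: if $P$ is the one-vertex path at $w$, then $\ell(P)=0$ and $c(P)$ equals the weight of $w$, namely $1$ for $w\in V$ and $0$ for $w\in U$, which is consistent with $\ell(P)=2c(P)+b(w,w)$ in both cases.
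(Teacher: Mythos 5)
Your proof is correct and takes essentially the same approach as the paper: both arguments rest on the observation that in a bipartite graph with $0/1$ node weights every path between fixed endpoints alternates sides, so the offset $b$ in $\ell(P)=2c(P)+b$ depends only on the endpoint parities and not on the path, from which (ii)--(iv) follow by direct substitution. You spell the parity case analysis out more explicitly than the paper's terse proof and, by restricting to $u,v$ on the weight-$1$ side (the relevant domain for $d_W,\ell_W$), you even get (iii) as an equality rather than a two-sided bound, which is a harmless strengthening.
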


The following lemma gives cost bounds on the mortar graph. In contrast to~\cite{bkm09}, we have to exclude singleton boundaries in property~(i) in order to avoid a cost explosion. To account for the singleton boundaries in the shortness property of~Lemma~\ref{thm:spanner} we bound their total number separately. (See proof of~Lemma~\ref{thm:spanner}.)
\begin{lemma}\label{lem:cost-mortargraph}
  The mortar graph $MG$ has the following two properties.
  \begin{enumerate}[(i)]
      \item The total cost $\sum_{B\in\mathcal{B}: E(W_B) \neq \varnothing}c(W_B) + \sum_{B\in\mathcal{B}: E(E_B) \neq \varnothing}c(E_B)$ of all the non-singleton western and eastern boundaries of all bricks is bounded by $O(\eps)\cdot\opt$.
      \item The total cost $c(MG)$ of the mortar graph is $O(1/\eps)\cdot \opt$.
  \end{enumerate}
\end{lemma}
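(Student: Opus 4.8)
The plan is to derive both parts from the corresponding \emph{edge-length} bounds for the mortar graph of~\cite{bkm09}, which are available to us as a black box because in Step~3 of Algorithm~\ref{alg-spanner} we invoked their construction with unit edge weights $\ell(e)=1$ and error parameter $\eps'=\eps/4$. Concretely, from~\cite{bkm09} we obtain (A)~$\ell(MG)=O(\ell(\bset)/\eps')$ and (B)~$\sum_{B\in\mathcal{B}}\big(\ell(W_B)+\ell(E_B)\big)=O(\eps')\cdot\ell(\bset)$ -- that is, the total edge length of the east/west brick boundaries is only an $O(\eps')$ \emph{fraction} of $\ell(\bset)$, not merely of $\ell(MG)$. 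It then remains to pass from edge lengths to node weights and to relate $\ell(\bset)$ to $\opt$. For the latter, note that $W'$ is bipartite: cutting open only splits vertices of $ST$, and every copy inherits the side -- hence the weight, $0$ or $1$ -- of its original. Therefore the simple cycle $\bset$ alternates between the weight-$1$ side and the weight-$0$ side, so exactly half of its vertices carry weight $1$, i.e.\ $c(\bset)=\ell(\bset)/2$; combined with Lemma~\ref{lem:cut-open} this gives $\ell(\bset)=2c(\bset)\le 20\,\opt$.

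For property~(ii): every edge of $MG$ has exactly one endpoint on the weight-$1$ side, and $MG$ has no isolated vertices (it is the union of the cycle $\bset$ and non-degenerate paths), so each weight-$1$ vertex of $MG$ is incident to at least one edge of $MG$; consequently the number of weight-$1$ vertices of $MG$ -- which equals $c(MG)$ -- is at most $|E(MG)|=\ell(MG)$. By~(A) and $\ell(\bset)\le 20\,\opt$ this is $O(\opt/\eps)$, proving~(ii).

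For property~(i): since $W_B$ and $E_B$ are paths, Lemma~\ref{lem:length-vs-cost}(ii) gives $c(W_B)\le\ell(W_B)/2+1$ and likewise for $E_B$. For every brick with non-singleton western boundary we have $\ell(W_B)\ge 1$, so the additive ``$+1$'' is dominated by $\ell(W_B)$ and
$$\sum_{B\in\mathcal{B}:\,E(W_B)\neq\varnothing}c(W_B)\ \le\ \sum_{B\in\mathcal{B}:\,E(W_B)\neq\varnothing}\Big(\tfrac{\ell(W_B)}{2}+1\Big)\ \le\ \tfrac32\sum_{B\in\mathcal{B}}\ell(W_B),$$
and analogously for the eastern boundaries. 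Adding the two bounds and invoking~(B), $\ell(\bset)\le 20\,\opt$ and $\eps'=\eps/4$ yields the claimed $O(\eps)\cdot\opt$. This is exactly where excluding singleton boundaries is essential: a singleton $W_B$ has $\ell(W_B)=0$ but possibly $c(W_B)=1$, so its ``$+1$'' cannot be charged to $\ell(W_B)$, and the number of such degenerate bricks may be polynomially large; their count is instead controlled separately in the proof of Lemma~\ref{thm:spanner}.

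I expect the main obstacle to be pinning down bound~(B) from~\cite{bkm09} in precisely this shape -- namely that the east/west brick boundaries, in contrast to the north/south ones, account for only an $O(\eps')$ fraction of $\ell(\bset)$ itself; an $O(\eps')$ fraction of $\ell(MG)=O(\ell(\bset)/\eps')$ would merely give $O(\opt)$, which is too weak for~(i). Once (A) and (B) are available, the conversion from edge lengths to node weights via Lemma~\ref{lem:length-vs-cost} together with the bipartite alternation of $\bset$ is routine, the only delicate point being the additive ``$+1$'' terms -- which is precisely why the statement is phrased for non-singleton boundaries only.
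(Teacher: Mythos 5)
Your proposal is correct. For property~(i) you do exactly what the paper does: quote the BKM edge-length bound, then convert to node weights via Lemma~\ref{lem:length-vs-cost} by observing that the additive ``$+1$'' is dominated by the edge length once the boundary is non-singleton, giving $c(P)\le \tfrac32\ell(P)$. For property~(ii), however, you take a genuinely different and arguably cleaner route. The paper decomposes $MG$ as the union of the strip graph and the non-singleton west/east boundaries, bounds the cost of the strip graph separately via Lemma~\ref{lem:strips-cost} (which itself goes through the chain $c(H)\le|V(H)|\le|E(H)|+1$ for a connected graph), and then adds the bound from part~(i). You instead notice that the bipartite structure of $W$ (and hence of $MG$) implies that each edge of $MG$ has exactly one weight-$1$ endpoint, and $MG$ has no isolated vertices; so counting incidences gives $c(MG)=|V_1(MG)|\le|E(MG)|=\ell(MG)$ directly, and the BKM bound $\ell(MG)=O(\ell(\bset)/\eps')$ finishes. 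This sidesteps the decomposition into strips and columns entirely and avoids the ``$+1$'' slack. Your observation $c(\bset)=\ell(\bset)/2$ (an exact equality for the bipartite cycle $\bset$) is also a slight sharpening of the paper's one-sided $\ell(\bset)\le 2c(\bset)$. You also correctly identify the one delicate dependency: that BKM give $\sum_B(\ell(W_B)+\ell(E_B))=O(\eps')\ell(\bset)$ as a fraction of $\ell(\bset)$, not merely of $\ell(MG)$, which is exactly what the paper's proof cites from Lemma~6.6 of~\cite{bkm09}.
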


\paragraph{Designating Portals.}
For finding the portals, we use directly the same greedy procedure as in Step 3(a)~\cite{bkm09}, as it does not work for bricks having boundary with small cost, because of the additive one in Lemma~\ref{lem:length-vs-cost} bound. To circumvent this issue, we pick {\em all} the vertices to be the set of portals when the boundary is cheap. And then for any remaining brick, the boundary cost is bounded from below. For these bricks the greedy procedure works, as the additive plus can be absorbed in the big-Oh by creating a factor 3 gap in the number of portals and the cost bounds, which is sufficient.

By balancing all the parameters, we get that for any brick $B$ in Mortar graph $MG$ there exists at most $3\theta$ portals $P(B)$ such that each vertex on the boundary of $B$ lies within a distance of at most $c(\partial B)/\theta$ from some portal. Here $\theta = \theta(\eps) = \Theta(g(\eps) \eps^{-2})$, where $g(\eps)$ is defined in Lemma~\ref{thm:brick-structural}.
\begin{lemma}
\label{lem:portals} Given a brick $B$ of the Mortar graph $MG$, there exists a set of vertices $P(B) \subseteq \partial B$, such that:
\begin{enumerate}
    \item (\textbf{Cardinality Property}) $|P(B)| \leq 3\theta$
    \item (\textbf{Coverage Property}) For any $u \in \partial B$, there exists $v \in P(B)$, such that $d_{\partial B}(u, v) \leq  c(\partial B)/\theta$ and $\ell_{\partial B}(u, v) \leq \ell(\partial B)/(3\theta)$
\end{enumerate}
\end{lemma}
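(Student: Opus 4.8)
I propose to adapt the greedy portal-designation of Step~3(a) of~\cite{bkm09}, with two modifications forced by the node weights: run the greedy with respect to the \emph{unweighted} boundary length $\ell_{\partial B}$ rather than the weighted length $d_{\partial B}$, and handle bricks with a short boundary by brute force. We may assume $\theta$ is a (large) integer, as rounding it up changes $f(\eps)$ only by a constant. The argument then splits on the size of $\ell(\partial B)$.

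In the cheap case $\ell(\partial B)\le 3\theta$ I would simply take $P(B):=V(\partial B)$. Since $\partial B$ is a cycle with $\ell(\partial B)$ edges it has at most $\ell(\partial B)\le 3\theta$ vertices, so the cardinality property holds, and the coverage property is trivial: for any $u\in\partial B$ set $v=u$, so $d_{\partial B}(u,v)=\ell_{\partial B}(u,v)=0$. This case is unavoidable — as remarked before the lemma, once $\ell(\partial B)$ (equivalently $c(\partial B)$, by Lemma~\ref{lem:length-vs-cost}) is only a small multiple of $\theta$, the additive $+1$ in Lemma~\ref{lem:length-vs-cost} makes the greedy spacing rule meaningless.

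In the expensive case $\ell(\partial B)> 3\theta$, set $S:=\ell(\partial B)/(3\theta)\ge 1$. I would traverse $\partial B$ from an arbitrary vertex $p_1$, and, having placed $p_i$, let $p_{i+1}$ be the first later vertex for which the traversed sub-path from $p_i$ has unweighted length $\ge S$, stopping once the traversal returns to $p_1$; put $P(B)=\{p_1,\dots,p_m\}$. The decisive point is that the edges of $\partial B$ are partitioned among the $m$ consecutive-portal sub-paths with \emph{no} repetition (whereas node weights would be double-counted at the portals): the first $m-1$ of these sub-paths have length $\ge S$ and the last has length $\ge 1$, so $(m-1)S+1\le\ell(\partial B)=3\theta\,S$, hence $m\le 3\theta$. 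For coverage, a vertex $u$ on a non-last sub-path (of length $<S+1\le 2S$) lies within unweighted distance $<S$ along that sub-path of one of its two portal endpoints, while a vertex on the last, short sub-path lies within distance $<S$ of both its endpoints; since every sub-path is far shorter than $\ell(\partial B)/2$ (recall $\theta$ is large), this sub-path distance to the chosen portal $v$ equals $\ell_{\partial B}(u,v)$, so $\ell_{\partial B}(u,v)\le S=\ell(\partial B)/(3\theta)$.

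Finally I would upgrade this to the weighted bound $d_{\partial B}(u,v)\le c(\partial B)/\theta$, which is where Lemma~\ref{lem:length-vs-cost} and the factor-$3$ slack are spent. Letting $Q$ be the sub-path of $\partial B$ from $u$ to $v$ realizing the previous bound, Lemma~\ref{lem:length-vs-cost}(ii) gives $d_{\partial B}(u,v)\le c(Q)\le\ell(Q)/2+1\le\ell(\partial B)/(6\theta)+1$; moreover, $\partial B$ is a cycle in the bipartite witness graph, so it has equally many vertices on each side and $c(\partial B)=\ell(\partial B)/2>3\theta/2$. Substituting $\ell(\partial B)=2c(\partial B)$ yields $d_{\partial B}(u,v)\le c(\partial B)/(3\theta)+1\le c(\partial B)/\theta$, the last step using $c(\partial B)>3\theta/2$. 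Together with the cardinality bound this completes the proof. The main obstacle — and why~\cite{bkm09} cannot be reused verbatim — is exactly the additive offset of Lemma~\ref{lem:length-vs-cost}: it forces the cheap/expensive split, and pushing the portal count down to precisely $3\theta$ while covering in both metrics at once is what dictates running the greedy in the unweighted metric (there the inter-portal sub-paths partition the edge set without double-counting) and then charging the additive terms to the deliberately loose factor $\theta$, rather than $3\theta$, in the $d$-part of the coverage property.
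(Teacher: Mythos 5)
Your proposal is correct and follows essentially the same route as the paper: handle the cheap case $|V(\partial B)|\le 3\theta$ by taking all boundary vertices as portals, run the greedy portal selection of~\cite{bkm09} in the \emph{unweighted} metric $\ell$ (you unfold the greedy explicitly rather than citing Lemmas~7.1--7.2 of~\cite{bkm09}, but the content is identical), and then convert the $\ell$-bound to a $d$-bound via Lemma~\ref{lem:length-vs-cost} together with the identity $\ell(\partial B)=2c(\partial B)$ for the alternating bipartite cycle $\partial B$ and the fact that $c(\partial B)>3\theta/2$ absorbs the additive $+1$. The only cosmetic difference is that you phrase the threshold in terms of $\ell(\partial B)$ rather than $|V(\partial B)|$, which coincide on a cycle.
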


Now we can sketch the proof of Lemma~\ref{thm:spanner}.
\begin{proof}[Proof of Lemma~\ref{thm:spanner}]

  (\ref{itm:shortness}) \textbf{Shortness property.}
  We have to bound the total cost of $H$ which consists of the mortar graph and optimal Steiner trees added in step 5 of Algorithm~\ref{alg-spanner}. By Lemma~\ref{lem:cost-mortargraph} the cost of the mortar graph is $O(1/\eps)\cdot \opt$. We bound the cost of Steiner trees analogously as in the Lemma 4.1~\cite{bkm09}, \ie we charge it to the cost of the mortar graph (losing a large constant). However, we have to take extra care to not overcharge vertices adjacent to multiple bricks.

  Consider any brick $B$ and any tree connecting portals of $B$ added in step 5. The cost of this tree can be upper-bounded by the cost of the boundary of the brick $c(\partial B)$. Since there is a constant number of such trees (this follows from Lemma~\ref{lem:portals}), the total cost of the trees added is constant times the cost of the boundary of the brick. Now, if every vertex belonged to the boundary of a constant number of bricks, this would imply that the total cost of all Steiner trees is bounded by constant times $\opt$. Below we show that if it is not the case for some vertices, then we have a different way to pay for the cost incurred by these vertices.

  We say that a vertex $v$ is a {\em corner} of a brick if it belongs to the intersection of $N$ (or $S$) with $E$ (or $W$). In a special case in which $E$ or $W$ is empty, we call $v$ which belongs to the intersection of $N \cap S$ a {\em trivial corner}. We also say, that $v$ is a {\em regular} boundary vertex of a brick if it is not a corner of this brick.
  
  Observe that $v$ can be a regular boundary vertex of at most two bricks. It remains to show how to charge corner vertices. For trivial corners, observe that there is as many unique pairs (corner vertex, corresponding brick) as there were strips during creation of the mortar graph. Note that there are $O(f(\eps)OPT)$ strips (see Lemma~\ref{clm:strips-count} in Appendix~\ref{sec:ommited-proofs}). This, together with the fact that weight of each vertex is at most $1$ implies that we charge at most constant times $\opt$ for trivial corners.

  The charging for non-trivial corners is different. By the first property of Lemma~\ref{lem:cost-mortargraph}, we know that the sum of the costs of west and east boundaries for all bricks is bounded by $O(\eps) \opt$. As non-trivial corners belong to $W$ or $E$, the total cost incurred by charging to non-trivial corners is also bounded by constant times $\opt$. This finishes the proof of the shortness property.

  (\ref{itm:spanning}) \textbf{Spanning property}
  The proof of the spanning property is in the spirit similar to the proof of Structural Theorem 3.2, and Lemma 4.2 in~\cite{bkm09}. However, we cannot use their approach of portal-connected graph via the {\em brick insertion} operation, as in the node weighted setting this would destroy the structure of the optimum solution. Therefore we give a slightly more direct proof, where we avoid the portal-connected graph at all.

  Moreover, we have to take extra care when showing structural lemmas. These proofs do not transfer immediately to the node-weighted instances. For example, we have to heavily use special structure of map-weighted instances to bound the number of {\em joining vertices}. Due to the high technicality of the arguments and a lack of space, the details are explained in Appendix~\ref{sec:proof-spanning}.
\qed
\end{proof}

\section{Node-weighted Contraction Decomposition}
\label{sec:contraction-decomposition}
    In this section we give a reduction of a spanner to graphs with bounded treewidth. The input to our reduction is a spanner ($\eg$ the one constructed in the previous section, see Lemma~\ref{thm:spanner}), $\ie$ a graph $H$ of cost $f(\epsilon)\cdot \opt$ that approximately preserves an optimum solution.

    We apply Lemma~\ref{thm:contraction-decomposition} (proven later in this section) with $k=\frac{2 \cdot f(\epsilon)}{\epsilon}$ to graph $H$ and obtain sets $E_0, E_1, \ldots, E_{k-1}$.
    Now, define the cost of the set of edges to be the total weight of vertices incident to edges in this set.
    Because every vertex belongs to at most two sets, the total cost of all the edge sets is at most
    $2 f(\epsilon) \cdot \opt$ and therefore, the cheapest set, say $E_c$ has cost at most $\epsilon \cdot \opt$.

    We now contract $E_c$ to obtain graph $H'$. We assign weight $0$ to the vertices resulting from contraction, while the weight of the other stays untouched. It is clear that after this operation the value of the optimum solution will not increase. Now we solve the node-weighted Steiner tree problem for $H'$ using Lemma~\ref{lem:bounded_tw_nwst} (see Appendix~\ref{sec:DP}). We can do this in polynomial time, since the treewidth of $H'$ is at most $k$. We include the set of edges $E_c$ in the obtained solution for $H'$ to get the final tree of cost $(1+\epsilon) \cdot \opt$.

    Therefore we are left with proving Lemma~\ref{thm:contraction-decomposition}. 
    Here, we build on Klein's~\cite{Klein08} contraction decomposition and modify it to our needs.

    \begin{proof}[Proof of Lemma~\ref{thm:contraction-decomposition}]

    \begin{figure}[t!]
    \centering
    \begin{subfigure}{.3\textwidth}
      \centering
      \includegraphics[width=.7\linewidth]{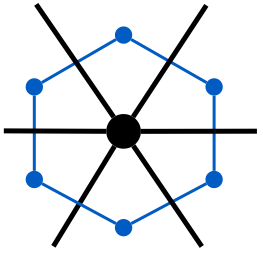}
      \caption{}
      \label{fig:nwcd1}
    \end{subfigure}%
    \begin{subfigure}{.3\textwidth}
      \centering
      \includegraphics[width=.53\linewidth]{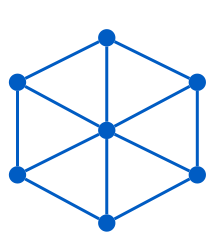}
      \caption{}
      \label{fig:nwcd2}
    \end{subfigure}
    \caption{(a) A vertex of a graph (in black) and the dual. The dual is shown in  blue. (b) Triangulation of the dual graph.}
    \label{fig:triangulating}
    \end{figure}

    At first, we triangulate the dual graph $H^*$ by adding an artificial vertex in the middle of each face of the dual graph and introducing artificial edges (see Figure~\ref{fig:triangulating}). This is the crucial step which --- as explained later --- enables us to control the level of edges in the breadth-first search tree.

    Let $J^*$ be the graph after above modification of $H^*$. Now, from some arbitrary node $r$ we run breadth-first search on $J^*$, this gives us a partition of $V(J^*) = L_0 \cup L_1 \ldots \cup L_d$, for some $d \in \mathbb{Z}^+$. Fix an  $i \in \{0,\ldots, k-1\}$, and let $E^*_i$ be the set of edges with one endpoint in $L_p$ and  the other endpoint in $L_{p+1}$, for all $p$ congruent to $i$ modulo $k$. Also, let $E_i$ be the set of primal edges of $H$ corresponding to the dual edges in $E^*_i$. Note that we do not include in $E_i$ the artificial edges. We claim that sets $E_i$ satisfy both the requirements.

    First we show that each vertex participates in at most two sets.
    \begin{lemma}
    \label{lem:decomposition-at-most-two-sets}
      For each vertex $v \in H$, all the incident edges of $v$ are contained in at most two sets $E_{i}, E_j$.
    \end{lemma}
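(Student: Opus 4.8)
The plan is to trace what happens to the edges around a single vertex $v \in H$ under the dual-triangulation construction. Fix $v$ and consider its incident primal edges $e_1, \dots, e_m$ in the cyclic (rotation) order given by the planar embedding. Each $e_t$ has a dual edge $e_t^*$ in $H^*$, and these dual edges form exactly the boundary of the face $f_v$ of $H^*$ that corresponds to $v$ (a vertex of a planar graph dualizes to a face of the dual). So the statement to prove is really about the face $f_v$ of $H^*$: I must show that after triangulating $f_v$ with a new apex vertex $x_v$ placed in its interior and running BFS from $r$ in $J^*$, the boundary edges of $f_v$ receive BFS levels from at most two consecutive residue classes mod $k$ — equivalently, the ``level'' assigned to each boundary edge $e_t^*$ (the smaller of the BFS depths of its two endpoints, which is the index $p$ with one endpoint in $L_p$ and the other in $L_{p+1}$) takes at most two distinct values, and those two differ by exactly one.

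The key step is the following claim: every boundary vertex of $f_v$ lies at BFS-depth either $\delta$ or $\delta+1$ in $J^*$, where $\delta$ is the depth of the apex $x_v$. One direction is immediate: since $x_v$ is adjacent (via an artificial edge) to every boundary vertex of $f_v$, each such boundary vertex has depth at most $\delta + 1$. For the other direction I would argue that $x_v$ itself is reached through one of the boundary vertices — its only neighbours in $J^*$ are the boundary vertices of $f_v$ — so there is a boundary vertex $u$ with $\mathrm{depth}(u) = \delta - 1$; and then for any other boundary vertex $w$, the artificial edge $uw$... wait, $u$ and $w$ need not be adjacent. Here is the cleaner argument: the triangulation adds the edges $x_v u$ for all boundary $u$, hence for any two boundary vertices $u, w$ we have a path $u - x_v - w$ of length $2$, so $|\mathrm{depth}(u) - \mathrm{depth}(w)| \le 2$. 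Combined with the fact that the depths along the boundary cycle change by at most $1$ between consecutive vertices (they are adjacent in $J^*$), a diameter-$2$ set of integers that is also ``connected by unit steps'' around a cycle must consist of at most two consecutive values — unless the cycle passes through three consecutive values $\delta-1, \delta, \delta+1$, which I rule out by using the apex: if some boundary vertex had depth $\delta - 1$ it would give $x_v$ depth $\le \delta - 1$... hmm, this is circular too, since $\delta$ was defined as the depth of $x_v$.

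Let me restructure: let $\delta := \mathrm{depth}(x_v)$. Every boundary vertex has depth in $\{\delta-1, \delta, \delta+1\}$ by the path $u - x_v$ of length one. If no boundary vertex has depth $\delta - 1$, all boundary vertices have depth in $\{\delta, \delta+1\}$ and we are done: each boundary dual edge $e_t^*$ then has level $\delta$ or $\delta - 1$... no — level is the smaller endpoint depth, so if both endpoints are in $\{\delta, \delta+1\}$ the level is $\delta$ or $\delta$; fine, levels lie in $\{\delta-1,\delta\}$ only if some endpoint has depth... I will simply observe that the level of each boundary edge lies in $\{\delta - 1, \delta\}$ in all cases, because its two endpoints have depths in $\{\delta-1,\delta,\delta+1\}$ differing by at most $1$ (they are adjacent), so the pair is one of $\{\delta-1,\delta\}, \{\delta,\delta+1\}$, or $\{\delta-1\},\{\delta\},\{\delta+1\}$ (equal endpoints are impossible for an edge in BFS), giving level $\delta-1$ or $\delta$. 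Hence all incident edges $e_t$ of $v$ have their dual edges at BFS-level in $\{\delta-1, \delta\}$, so they fall into $E_i$ for $i \in \{(\delta-1) \bmod k,\ \delta \bmod k\}$, at most two sets. I should also note the artificial edges incident to $x_v$ are never placed in any $E_i$, so they are irrelevant. The main obstacle, and the reason the triangulation step is ``crucial'' as the authors emphasize, is exactly this: without the apex $x_v$, a face of $H^*$ of unbounded size could be crossed by BFS levels spanning many residue classes, and it is the one-hop connection through $x_v$ that pins every boundary vertex to within distance $1$ of a common depth. I expect the write-up to spend most of its effort making precise that ``adjacent vertices in a BFS tree differ in depth by at most $1$'' and that ``a common neighbour forces depths into a window of size $2$'', both of which are standard, so the real content is just the observation above.
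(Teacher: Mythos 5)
Your argument is correct and matches the paper's: both use the artificial apex $x_v$ (the paper's $g$) to force all dual vertices on the boundary of the face of $H^*$ corresponding to $v$ into at most three consecutive BFS layers (via pairwise distance $\leq 2$ in the paper's phrasing, via depth $\in\{\delta-1,\delta,\delta+1\}$ in yours), whence the corresponding dual edges span at most two residue classes mod $k$. One small inaccuracy does not affect the conclusion: an edge of $J^*$ \emph{can} have both endpoints at the same BFS depth (a cross edge in the BFS layering), but such an edge is simply excluded from every $E_i^*$ by the construction, so the bound of two sets still holds.
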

\begin{proof}
Consider the faces $f_1, f_2, \dots f_l$ incident to a vertex $v$ (in the clockwise order of appearing in the planar embedding). Each $f_i$ has a corresponding vertex in the dual graph $H^*$ and therefore also in $J^*$. Moreover, there is a cycle on these vertices. Call the edges of this cycle $e^*_1, e^*_2, \dots e^*_l$. These edges correspond to all primal edges $e_1, e_2, \dots e_l$ incident to vertex $v$ in $H$. However, $J^*$ has also additional vertex $g$ adjacent to all vertices $f_i$. Therefore, the distance between any two $f_i$ and $f_j$ in $J^*$ is at most 2. Hence, the vertices $f_i$ will be in at most three consecutive layers of BFS ordering. Therefore, all the incident edges of $v$ will be contained in at most two of the $E^*_i$'s, and hence be contained in at most two corresponding $E_i$'s. 
\qed
\end{proof}
    We are left with showing that contracting each set reduces the treewidth. In essence, we use the argument of Klein. We only need to take care of artificial edges.
    \begin{lemma}
    \label{lem:decomposition-has-bounded-tw}
      The graph $H$ after contracting $E_i$ has treewidth $O(k)$.
    \end{lemma}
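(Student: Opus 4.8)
The plan is to mimic Klein's contraction-decomposition argument, which shows that contracting the primal edges whose dual edges cross between BFS levels congruent to $i \bmod k$ yields a graph of treewidth $O(k)$; the only new wrinkle is that we performed BFS on the triangulated dual $J^*$ rather than on $H^*$ itself, and that we must \emph{not} contract the artificial edges. First I would recall the standard fact (Klein \cite{Klein08}, Baker \cite{Baker94}): if one runs BFS from $r$ in a planar graph and, for a fixed residue $i$, deletes the dual edges between levels $p$ and $p+1$ for all $p \equiv i \pmod k$, the resulting planar graph has an outerplanarity / layered structure of depth $O(k)$ and hence treewidth $O(k)$; and ``deleting a dual edge'' is exactly ``contracting the corresponding primal edge.'' So contracting $E^*_i$ in $J^*$ (the full dual edge set between consecutive congruent levels, including artificial edges) gives a graph $J$ whose dual has treewidth $O(k)$, i.e. $J$ itself has treewidth $O(k)$ by duality of treewidth on planar graphs (up to an additive constant).

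Next I would pass from $J$ back to the graph we actually care about, namely $H$ with only the non-artificial edges of $E^*_i$ contracted. The point is that $H$ (with artificial dual-face vertices suppressed) is obtained from $J^*$ by \emph{deleting} the artificial edges in the dual, equivalently \emph{contracting} the artificial edges in the primal $J^{**}=H$-plus-subdivision picture — but more directly: the primal graph underlying $J^*$ is just $H$ together with the artificial triangulating edges being dual edges, so in the primal these artificial dual edges correspond to \emph{no} primal edges at all; rather, adding an artificial vertex inside a dual face and connecting it to the face's vertices corresponds, in the primal, to subdividing/duplicating nothing in $H$ — the primal graph of $J^*$ is still essentially $H$. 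The clean way to phrase it: there is a graph minor relationship. Contracting $E_i$ in $H$ yields a minor of the graph obtained by contracting all of $E^*_i$ (artificial edges included) in $J^*$, because the extra artificial edges of $E^*_i$ that we omitted only identify additional vertices, i.e. performing fewer contractions gives a graph of which the fully-contracted graph is itself a contraction-minor; hence $H/E_i$ is obtained from $J/E^*_i$ by un-doing some contractions, which can only... — wait, that is the wrong direction. The correct direction: $H / E_i$ is a \emph{subgraph/minor of} $J^* / E^*_i$ after deleting artificial vertices, since deleting the artificial vertices and the artificial edges from $J^*$ gives back $H$, and minors of a graph of treewidth $O(k)$ have treewidth $O(k)$. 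Concretely: $H$ is a minor (in fact a subgraph after vertex deletion) of the planar graph $J^*$; contraction commutes with this; and treewidth is minor-monotone; therefore $\mathrm{tw}(H/E_i) \le \mathrm{tw}(J^*/E^*_i) = O(k)$.

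The step I expect to be the main obstacle is making the ``$H/E_i$ is a minor of $J^*/E^*_i$'' claim fully rigorous, because contraction and vertex/edge deletion do not literally commute as operations on labelled edge sets — one has to be careful that contracting the artificial edges of $E^*_i$ before deleting the artificial vertices does not accidentally merge two ``real'' faces of $H$ in a way that has no counterpart in $H/E_i$. I would resolve this by arguing at the level of the BFS layering directly, as the paper's proof of Lemma \ref{lem:decomposition-at-most-two-sets} already sets up: since each artificial vertex $g$ sits in a single dual face and is adjacent only to that face's boundary vertices, the artificial edges span at most two consecutive BFS levels, so each artificial dual edge either lies within the ``kept'' part or is one of the removed edges between two congruent levels; in the latter case contracting it in $J^*$ merely identifies an artificial vertex with one of its neighbours, which after deleting artificial vertices has no effect on $H$. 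Thus the layered decomposition of $J^*/E^*_i$ of depth $O(k)$ restricts to a layered decomposition of $H/E_i$ of depth $O(k)$, from which a tree decomposition of width $O(k)$ follows by the standard layering-to-treewidth argument of Baker \cite{Baker94} and Robertson--Seymour, completing the proof. \qed
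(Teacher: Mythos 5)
Your overall plan — invoke Klein's decomposition on $J$, the primal of $J^*$, then transfer the bound to $H/E_i$ — is the same as the paper's, and the punchline (treewidth is monotone under the appropriate operation) is also the same. However, the writeup contains a genuine primal/dual confusion that makes the key transfer step false as stated. You write that ``contracting $E^*_i$ in $J^*$'' gives a graph of small treewidth, that ``the primal graph of $J^*$ is still essentially $H$,'' and that ``$H/E_i$ is a subgraph/minor of $J^*/E^*_i$ after deleting artificial vertices.'' All three are incorrect: the contraction in Klein's theorem happens on the \emph{primal} side (contract in $J$ the primal edges $F_i$ dual to $E^*_i$; equivalently delete $E^*_i$ from $J^*$), the primal of $J^*$ is \emph{not} essentially $H$ but a strictly larger graph $J$ in which every vertex of $H$ has been blown up into a cycle of new vertices joined by artificial edges, and deleting artificial vertices from $J^*$ recovers $H^*$, not $H$.

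The correct transfer step, which the paper uses, is via contraction, not deletion: $H$ is a \emph{contraction-minor} of $J$ (contract all artificial primal edges), so after Klein gives $\mathrm{tw}(J/F_i) = O(k)$, contracting the remaining artificial edges in $J/F_i$ yields exactly $H/E_i$ (since contracting $F_i$ and then the rest of the artificial edges is the same as contracting all artificial edges and then $E_i$), and contraction does not increase treewidth. Your instinct to worry that ``contraction and deletion do not literally commute'' is pointing at this very issue — the resolution is that no deletion is needed at all; the whole passage from $J$ to $H$ is by contraction. The layering intuition in your last paragraph is in the right direction, but you should replace the deletion-based argument by the contraction-based one to make the proof correct.
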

    \begin{proof}
      Let $J$ be a dual graph of $J^*$. We will call $J$ the primal of $J^*$.

      By directly applying the result of Klein, contracting all the primal edges of $E^*_i$ from $J$ results in a graph $X$ of treewidth $O(k)$. It is easy to see, that contracting all other artificial edges in $X$ results exactly in a graph $H_{/E_i}$. Since contraction of edges does not increase treewidth, the lemma follows.
    \qed
    \end{proof}
    The algorithm described above together with Lemma~\ref{lem:decomposition-at-most-two-sets} and Lemma~\ref{lem:decomposition-has-bounded-tw} completes the proof of Lemma~\ref{thm:contraction-decomposition}.
    \qed
    \end{proof}

\section{Conclusion}
\begin{wrapfigure}{t}{0.3\textwidth}
\centering
\vspace{-23pt}
\includegraphics[width=0.3\textwidth]{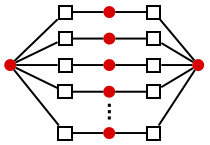}
\caption{Node-weighted subset spanner does not exists. Red vertices are of cost 1, squares are terminals of cost 0.}
\vspace{-10pt}
\label{fig:nosubsetspanner}
\end{wrapfigure}
We reduced the node-weighted Steiner tree problem on planar graphs to the problem of finding a spanner.
The main obstacles in constructing such general spanner are caused by the high-degree vertices. The first difficulty arises already in the \textit{cutting-open} step. The second issue is related to bounding the number of joining vertices. 

As we have shown, both of the difficulties are solvable in map-weighted graphs. However, we pose now an interesting open problem: decide the existence of a PTAS for node-weighted Steiner tree problem on map-weighted graphs where terminals are allowed to lie also on vertices of weight $0$.

There are two reasons why the above open problem is compelling. First, it nicely isolates the first difficulty - the latter issue is not present on such graphs. Second, the node-weighted \textit{subset spanner} does not exist for these instances. The subset spanner is a cheap subgraph (in terms of optimum Steiner tree) that approximately preserves distances between pairs of terminals. For contrast, subset spanner construction exists in the edge-weighted case \cite{Klein06}.

Figure~\ref{fig:nosubsetspanner} gives the example of such node-weighted instance. The cheapest Steiner tree has cost equal to 2. However, the subset spanner would have to use all the red central vertices.

\bibliographystyle{splncs04}
\bibliography{latin}

\appendix

\section{Proof of Lemma~\ref{lem:bounded_tw_nwst}}
\label{sec:DP}


Here we show how to obtain a polynomial time algorithm for computing the minimum node weighted Steiner network for a graph $G$ with constant treewidth $t$, whose tree decomposition ${\cal T}=(T,\{X_b\}_{b\in V(T)})$ is given to us as input. We note that an approximate tree decomposition of a planar graph can be obtained in linear time~\cite{kammer2016approximate}. We are given a set of terminals $R$ which need to be connected together and the weight function on the nodes is $w:V(G) \rightarrow \mathbb{R}^+$. We would be doing dynamic programming over the bags of the tree decomposition. The algorithm we present here is analogous to the algorithm for the edge weighted version~\cite{chimani2012improved}. As we were unable to find any algorithm for the node weighted variant in the literature, we present the algorithm for completeness. We start with the definition of tree decomposition.
\begin{definition}[Tree Decomposition]
For a graph $G$, a pair ${\cal T} = (T,\{X_b\}_{b\in V(T)})$ is called as the tree decomposition of $G$, if $T$ is a tree, and the following conditions are satisfied.
\begin{enumerate}
    \item $\bigcup_{b \in V(T)} X_b = V(G)$,
    \item if $(u,v)\in E(G)$, then there is a node $b\in V(T)$, such that $u,v \in X_b$, and
    \item if $u \in X_a$ and $u \in X_b$, then $u$ is contained in $B_z$, for  each $z$ on the unique path from $a$ to $b$ in $T$. \end{enumerate}
    The treewidth of ${\cal T}$ is defined to be ${\rm tw}({\cal T}) = \max_{b \in V(T)} |X_b| -1$.
\end{definition}

We would be using the following  special form of tree decomposition for which dynamic programming algorithms are much simpler and cleaner.

\begin{definition}[Nice Tree Decomposition] A tree decomposition 
${\cal T} = (T,\{X_b\}_{b\in V(T)})$ 
of a graph $G$ is called as a nice tree decomposition if $T$ is a tree rooted at $r$, with $X_r=\emptyset$, and each bag of ${\cal T}$ is one of the following types.
\begin{description}
    \item [Leaf Node] A leaf $\ell \in V(T)$, with $X_\ell = \emptyset$.
    \item [Introduce Node] A non-leaf node $a \in V(T)$, with exactly one child $b$ for which $X_a = X_b \cup \{u\}$, for some $u \in V(G)$, with $u \notin X_b$.
    \item [Forget Node] A non-leaf node $a \in V(T)$ with exactly one child $b$,  for which $X_a = X_b \setminus \{u\}$, for some $u \in V(G)$, with $u \in X_b$.
    \item[Join Node] A non-leaf node $a \in V(T)$ with two children $b,c$, such that $X_a=X_b=X_c$.
\end{description}
\end{definition}

We remark that any tree decomposition can be converted into a nice tree decomposition in polynomial time without increasing the tree width. 
To each node $b\in V(T)$, we associate $G_b$, the subgraph of $G$ induced over the union of the bags corresponding to the nodes contained in the subtree $T_b$. 
We make some simplifying assumptions about the tree decomposition of an instance on which we want to compute the minimum cost node weighted Steiner tree, $(i)$ the terminals are all degree $1$, this can be achieved by placing a node $s'$ in place of an existing terminal $s$ with cost equal to $0$, and putting the terminal $s$ as a leaf node adjacent to $s'$, $(ii)$ the tree decomposition has a terminal present only in a leaf node or the root node, and $(iii)$ the root $r\in K$, is a terminal.

The dynamic programming table $C$ for each node $b\in V(T)$, is indexed by a tuple $(b,S,P)$, where $S \subseteq X_b$, and $P = \{S_1,\dots, S_q\}$ is a partition of $S$. A table entry 
$C(b,S,P)=$ cost of minimum cost node weighted tree $N$ connecting terminals in $R \cap V(G_b)$, satisfying the following properties.
\begin{itemize}
    \item $X_b \cap V(N) = S$, and
    \item $N$ has exactly $q$ components, which can be ordered as $C_1, \dots, C_q$ such that $S_i = V(C_i) \cap X_b$, for $i \in [q]$.
\end{itemize}

As we are interested in a minimum cost tree, we would store a value of $+\infty$ in $C(b,S,P)$ to mark the fact that a tree satisfying the definition  above does not exist.
The weight of optimum Steiner tree will be stored in $C(r,\{r\},\{\{r\}\})$, where $r$ is the root of $\cal{T}$. Next, we show how to fill the dynamic programming table $C$.

\begin{description}
    \item[Leaf Node]
    Suppose $t\in T$ is a leaf node with $X_t=\{u\}$. Two cases arise, if $u \in R$, then $C(t,\emptyset,\{\emptyset\})=+\infty$ and $C(u,\{u\},\{\{u\}\})=w(u)$. Otherwise, $C(t,\emptyset,\emptyset)=0$ and $C(u,\{u\},\{\{u\}\})=w(u)$. 
    
    \item [Introduce Node] Let $a$ be an introduce node with exactly one child $b$ for which $X_a = X_b \cup \{u\}$, for some $u \in V(G)$, with $u \notin X_b$. Recall that due to the properties satisfied by the tree-decomposition $\cal{T}$, $u$ can not be a terminal, and we are not forced to include it in a solution, like in the leaf node above. As the edges are free of cost we can always include all edges incident to solution vertices. Suppose we want to fill an entry $C(a,S,P)$, with $P = \{S_1,\dots, S_q\}$. If $u\notin S$, then $C(a,S,P) = C(b,S,P)$. If $u \in S$, then let $u\in S_i$, for some $i\in [q]$.
    If $N_{G_b}[u]\cap X_b \not \subseteq S_i$, then $C(a,S,P) = +\infty$. Otherwise, fix $C(a,S,P) = w(u) + C(b,S\setminus \{u\},(P\setminus \{S_i\}) \cup \{S_i\setminus \{u\}\})$. 
    
    \item [Forget Node] A non-leaf node $a \in V(T)$ with exactly one child $b$, for which $X_a = X_b \setminus \{u\}$, for some $u \in V(G)$, with $u \in X_b$. If a solution $N$ does not use the vertex $u$, then the cost is $C(b, S, P)$. If a partial solution uses the vertex $u$, then we need to look up all the table entries which contained $u$. This gives us the formula
    \[
    C(a, S, P) = \min \{ \min_{
    P'\mbox{\footnotesize :a partition of }(S\cup \{u\})
    } 
    C[b,S\cup\{u\}, P')],
    C[b,S,P]
    \}.
    \]
    \item[Join Node] For a non-leaf node $a \in V(T)$ with two children $b,c$, such that $X_a=X_b=X_c$, it is enough to make sure that the partitions of the merged bags are same, along with subtracting the double counted vertices. This gives us the formula
    \[
    C(a, S, P) = C(b, S, P) + C(c, S, P) - w(S).
    \]
\end{description}

The proof of the algorithm presented above is analogous to the edge weighted variant, we refer the reader to~\cite{chimani2012improved} for details. The running time of the algorithm is as claimed as the number of partitions of the vertices in a bag of the tree decomposition is $2^{\OO (t \log t)}$.
This completes the proof of Lemma~\ref{lem:bounded_tw_nwst}.


\section{Ommited proofs}
\label{sec:ommited-proofs}

\noindent \textbf{Lemma~\ref{lem:cut-open}.}
\textit{
The cost $c(\bset)$ of the boundary $\bset$ is at most $10 \cdot \opt$. Moreover $R \subseteq V(\bset)$.
}
\begin{proof}[Proof of Lemma~\ref{lem:cut-open}]
The second property is true by construction. 
The cost of $\bset$ is $\sum_{v \in V(ST)} \alpha(v)w(v)$, where $\alpha(v)$ is the number of copies of any vertex $v \in ST$ in $\bset$. Note that $\alpha(v) = d_{ST}(v)$, where $d_{ST}(v)$ is the degree of $v$ in the Steiner tree $ST$. Let us fix any node $r \in V(ST)$ such that $d_{ST}(r) > 1$ to be the root node and let $U \subseteq V(ST)$ be the set of leaves of $ST$. WLOG, we can assume that all leaves in $U$ belong to $T$. This implies that for each $v \in U$, $w(v) = 1$. An easy counting argument implies that $|U| = d_{ST}(r) + \sum_{v \in ST - (U \cup \{r\})} (d_{ST}(v) - 2)$. Let $V_2 \subseteq V(ST) - \{r\}$ be all the vertices of degree two and let $I = ST - (U \cup \{r\} \cup V_2)$. Clearly, for any $v \in I$, the degree in $ST$ is at least three. This fact and the above equality implies that $|U| \geq 1 + \sum_{v \in I} 1 = 1 + |I|$. Putting all these things together, we get
\begin{align*}
    c(\bset) &= \sum_{v \in V(ST)} \alpha(v)w(v) = \sum_{v \in V(ST)} d_{ST}(v) w(v) \\
             &= \sum_{v \in V_2} d_{ST}(v) w(v) + \sum_{v \in V(ST) - V_2} d_{ST}(v) w(v) \\
             &\leq 2 w(V_2) + \sum_{v \in V(ST) - V_2} d_{ST}(v) \\
             &\leq 2 w(V_2) + |U|+ d(r) + \sum_{v \in I} d_{ST}(v) \\
             &\leq 2 w(V_2) + |U|+ d(r) + \sum_{v \in I} (d_{ST}(v) - 2) + 2 |I| \\
             &\leq 2 w(V_2) + |U| + |U| + 2 |U| \\
             &\leq 4 c(ST) \leq 10 \cdot \opt
\end{align*}
The second last inequality follows because $w(U) = |U|$, $V_2 \cap U = \emptyset$. The last inequality follows because $ST$ was a $2.4$-approximate solution.
\qed
\end{proof}

\noindent \textbf{Lemma~\ref{lem:length-vs-cost}.}
\textit{
Let $P,P'$ be two paths sharing both of their end points. Then the following properties hold.
\begin{enumerate}[(i)]
    \item There is $b\in\{0,-1,-2\}$ such that $\ell(P)=2c(P)+b$ and $\ell(P')=2c(P')+b$.
    \item $\ell(P)/2\leq c(P)\leq \ell(P)/2+1$
    \item $\ell_W(u,v)/2\leq d_W(u,v)\leq \ell_W(u,v)/2+1$
    \item $P$ is a shortest path under $\ell_W$ if and only if it is a shortest path under~$d_W$.
\end{enumerate}
}
\begin{proof}
   To see the first property note that $P$ and $P'$ both correspond to sequences of node weights alternating between $0$ and $1$. The property follows since both sequences have the same start and end value since $P$ and $P'$ share their end nodes.
   
   The second property is a direct consequence of the first one by plugging $b=0$ and $b=-2$ to $\ell(P)/2-b/2=c(P)$. The third property is a direct consequence of the second one. The third property also follows from the first property by assuming that $P$ is a shortest path w.r.t.\ to $\ell_W$ or $d_W$.
 \qed
 \end{proof}

For map-weighted graph $W'$, the following claim is easy to see.

\begin{observation}
\label{clm:strict-paths}
If $P,P'$ are any two paths between any two vertices $u, v $ in $W'$ such that $\ell(P) < \ell(P)'$, then $V(P) \leq V(P') - 2$.
\end{observation}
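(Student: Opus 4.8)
The plan is to use the one structural fact that makes these instances special here: the witness graph $W'$ is \emph{bipartite}. Let $V(W') = V \cup U$ be its bipartition. In a bipartite graph, every walk between two fixed vertices $u$ and $v$ has a length of fixed parity --- even if $u$ and $v$ lie on the same side of the bipartition, odd otherwise --- and in particular this holds for simple paths. Hence $\ell(P)$ and $\ell(P')$ have the same parity, so the strict inequality $\ell(P) < \ell(P')$ actually upgrades to $\ell(P') \ge \ell(P) + 2$.

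It then remains to translate this edge-count gap into a vertex-count gap, which is immediate: a path with $\ell$ edges has exactly $\ell + 1$ vertices, so $|V(P)| = \ell(P) + 1$ and $|V(P')| = \ell(P') + 1$. Combining with the previous paragraph, $|V(P)| = \ell(P) + 1 \le (\ell(P') - 2) + 1 = |V(P')| - 2$, which is the claimed bound (reading the statement with the evident typographical slips corrected: $\ell(P)'$ should be $\ell(P')$, and $V(P)$, $V(P')$ denote the numbers of vertices on the two paths).

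I do not expect any genuine obstacle here: the observation is a one-liner once one recalls that map-weighted instances are bipartite. The only point requiring mild care is that the claim is about \emph{paths} rather than arbitrary connected subgraphs, so that the identity ``number of edges plus one equals number of vertices'' is available; the parity half of the argument would survive for walks as well, but the exact vertex count would not.
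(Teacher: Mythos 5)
Your proof is correct and matches the approach the paper implicitly relies on: the paper states the observation without proof (``For map-weighted graph $W'$, the following claim is easy to see''), and the surrounding machinery (Observation~\ref{obs:alternation_bip}, Lemma~\ref{lem:length-vs-cost}(i)) is likewise built on the fact that $W'$ is bipartite and hence all $u$--$v$ paths have lengths of the same parity. Your parity-plus-vertex-count argument is exactly the ``easy'' argument being alluded to; the only thing worth making explicit (which you effectively do) is that the cut-open graph $W'$ inherits bipartiteness from the witness graph $W$, since cutting open along a tree only duplicates vertices and does not alter the two-coloring.
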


Let $H$ be the graph we get after adding the shortcuts to the boundary $\bset$ which decomposes $W'$ into the strips. 

\begin{lemma}
\label{clm:strips-count}
The total number of strips is bounded by $|V(\bset)|/2+1$. 
\end{lemma}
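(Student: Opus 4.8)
The plan is to show that the strips are cut off by a sequence of \emph{shortcut} operations on the boundary, that each such operation creates exactly one new strip, and that each operation removes at least two edges from the boundary cycle. Since the boundary of $W'$ has only $|V(\bset)|$ edges to begin with, this caps the number of operations at $|V(\bset)|/2$, and hence the number of strips at $|V(\bset)|/2+1$. The only genuinely new ingredient over the edge-weighted argument of~\cite{bkm09} is that ``strictly shorter'' means ``shorter by at least two'' here, which is exactly Observation~\ref{clm:strict-paths} (it holds because $W'$ is bipartite).

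In more detail, recall from the mortar-graph construction of~\cite{bkm09} (Section~6) that $H$ is obtained from $\bset$ by repeatedly selecting a subpath $A$ of the current boundary cycle together with a path $Q\subseteq W'$ having the same endpoints as $A$ and with $\ell(Q)<\ell(A)$, re-routing the boundary along $Q$ and discarding $A$. Write $\bset=\bset_0,\bset_1,\dots,\bset_m$ for the successive boundary cycles, so $m$ is the number of shortcut operations. Each operation splits off exactly one region (the one bounded by $A_i\cup Q_i$), which is a strip, and the region still enclosed by $\bset_m$ at the end is the last strip; hence the number of strips equals $m+1$. Now fix a step $i$, and let $u,v$ be the common endpoints of $A_i$ and $Q_i$. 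Both $A_i$ and $Q_i$ are $u$--$v$ paths in $W'$ with $\ell(Q_i)<\ell(A_i)$, so Observation~\ref{clm:strict-paths} gives $|V(Q_i)|\le|V(A_i)|-2$, i.e.\ $\ell(Q_i)\le\ell(A_i)-2$. Therefore
\[
|E(\bset_i)| \;=\; |E(\bset_{i-1})| - \ell(A_i) + \ell(Q_i) \;\le\; |E(\bset_{i-1})| - 2 ,
\]
and inductively $|E(\bset_m)| \le |E(\bset_0)| - 2m = |V(\bset)| - 2m$ (using $|E(\bset)|=|V(\bset)|$ for the simple cycle $\bset$). Since $|E(\bset_m)|\ge 0$, we obtain $m\le|V(\bset)|/2$, and the number of strips is $m+1\le|V(\bset)|/2+1$.

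The main obstacle is not this computation but matching it to the precise form of the construction in~\cite{bkm09}: one must check (a) that each shortcut they add genuinely replaces a boundary subpath by a strictly shorter path in $W'$ (so Observation~\ref{clm:strict-paths} applies and the saving of $2$ is available), and (b) that each shortcut contributes exactly one strip (so that counting shortcuts counts strips, e.g.\ because the shortcuts are internally disjoint from the current boundary). Both are bookkeeping about how strips are cut off, but they are where the argument has to engage with the black-box construction, and they are what makes the factor $1/2$ appear: in the edge-weighted setting no such factor is available, whereas here it is supplied by the parity of path lengths in the bipartite (map-weighted) graph $W'$.
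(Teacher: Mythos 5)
Your proof is correct and follows essentially the same route as the paper's: both track how the boundary cycle shrinks with each shortcut, both invoke Observation~\ref{clm:strict-paths} (bipartiteness of $W'$) to gain the saving of two edges per shortcut that is unavailable in the edge-weighted setting, and both deduce that at most $|V(\bset)|/2$ shortcuts (hence $|V(\bset)|/2+1$ strips) are possible.
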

\begin{proof}
 We can view the strip decomposition process described in Step~1(c),~\cite{bkm09} as reducing the length of the {\em unprocessed} part of the inside of boundary $\bset$.
 Initially the length of the boundary $\ell(\bset)$ is $|V(\bset)|$. Let $H = \bset$ be the initial graph to which we will add the shortest paths to create strip decomposition. As soon as the process adds a shortest path $N$ from $W'$ between vertices $x, y \in V(H)$ to create a strip, then the length of the part enclosed by $N$ and $H - H [x, y]$ shrinks by at least two. 
 This is true by Observation~\ref{clm:strict-paths}, since the shortest path $N$ is added iff the length of $H [x, y]$ is strictly longer than $N$.
 Now since the length can decrease by at most $|V(\bset)|$ times, hence the process can add at most $|V(\bset)|/2$ shortcuts. The bound on strips has a plus one to account for the last strip created simultaneously with the second last one.
 
 \end{proof}

\begin{lemma}
\label{lem:strips-cost} The total cost of strips is bounded by $O(\eps^{-1}) OPT$.
\end{lemma}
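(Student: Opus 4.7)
The plan is to combine the edge-length analysis of the strip decomposition from~\cite{bkm09} with the edge-length/node-cost conversion provided by Lemma~\ref{lem:length-vs-cost}. Recall that the mortar-graph construction (which includes the strip-decomposition step) was invoked with unit edge-weights $\ell(e)=1$ for all $e\in E(W')$ and error parameter $\eps'=\eps/4$. The first step is to invoke the standard analysis of~\cite{bkm09}, which shows that under these parameters the sum $\sum_N \ell(N)$ of edge-lengths of all shortcut paths $N$ added during the strip decomposition is at most $O(\eps^{-1})\cdot \ell(\bset)$; this part depends only on the edge-length function and the shortcut-selection criterion, not on any node weights, so it transfers verbatim.

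Next, each shortcut $N$ shares both of its endpoints with the arc of $\bset$ it shortcuts, so Lemma~\ref{lem:length-vs-cost}(ii) gives $c(N)\leq \ell(N)/2+1$. Summing over all shortcuts and using Lemma~\ref{clm:strips-count} to bound their number by $|V(\bset)|/2+1$, I would derive
\begin{equation*}
\sum_{N} c(N) \;\leq\; \tfrac{1}{2}\sum_{N}\ell(N) \;+\; \tfrac{1}{2}|V(\bset)|+1 \;\leq\; O(\eps^{-1})\,\ell(\bset) \;+\; O(|V(\bset)|).
\end{equation*}
Because $W'$ is bipartite with its two color classes carrying weights $1$ and $0$, the cycle $\bset$ alternates between the classes; therefore $|V(\bset)|=\ell(\bset)=2c(\bset)$. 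By Lemma~\ref{lem:cut-open}, $c(\bset)\leq 10\cdot\opt$, so both $|V(\bset)|$ and $\ell(\bset)$ are $O(\opt)$. Substituting gives $\sum_N c(N) = O(\eps^{-1})\cdot\opt$, and adding the boundary cost $c(\bset)=O(\opt)$ itself yields the claimed $O(\eps^{-1})\cdot\opt$ bound on the total node cost of all strips.

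The main obstacle is the first step: one has to verify that the length-based shortcut bound of~\cite{bkm09} really depends only on the edge-weight function and the shortcut-selection criterion, so that plugging in $\ell\equiv 1$ and $\eps'=\eps/4$ yields a bound with $\ell(\bset)$ on the right-hand side. Once this is granted, everything else is routine: Lemma~\ref{lem:length-vs-cost}(ii) converts edge-length to node-cost at the price of an additive $+1$ per shortcut, and the additive error is absorbed by the bound on the number of strips from Lemma~\ref{clm:strips-count} together with the bipartite-alternation identity $|V(\bset)|=2c(\bset)$.
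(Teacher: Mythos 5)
Your proof is correct, and it reaches the same $O(\eps^{-1})\cdot\opt$ bound, but it takes a more roundabout route than the paper's. You convert each shortcut path $N$ separately via Lemma~\ref{lem:length-vs-cost}(ii), incurring an additive $+1$ per shortcut, and then call in Lemma~\ref{clm:strips-count} to bound the number of shortcuts and absorb those additive errors. The paper instead sidesteps path-by-path conversion entirely: it bounds $|E(H)|\leq (1+\eps^{-1})\ell(\bset)$ by Lemma~6.3 of~\cite{bkm09}, notes that $H$ is connected so $|V(H)|\leq |E(H)|+1$, and then uses the single observation that $c(H)\leq |V(H)|$ because every vertex weight in a map-weighted graph is at most~$1$, finishing with $\ell(\bset)\leq 2c(\bset)=O(\opt)$. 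That one inequality $c(H)\leq|V(H)|$ collapses all the per-path bookkeeping you do, and it avoids the dependence on Lemma~\ref{clm:strips-count} (which the paper reserves for counting trivial corners in the shortness argument, not for this lemma). Your approach is sound and makes the use of the length-to-cost conversion explicit, which some readers may find clearer, but the paper's global counting of vertices is both shorter and more robust, since it does not require tracking the additive offsets of the conversion at all.
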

\begin{proof}
    Using Lemma~6.3,~\cite{bkm09} which says that the total length of strips, i.e. $\ell(H) = |E(H)|$ is at most $(1+\eps^{-1}) \ell(\bset)$, implies that $|V(H)|$ is at most $(1+\eps^{-1}) \ell(\bset)$, since $H$ is a connected graph. This together with the fact that $c(H) \leq |V(H)|$ and $\ell(\bset) \leq 2 c(\bset) = O(OPT)$, implies the lemma.
\end{proof}

\noindent \textbf{Lemma~\ref{lem:cost-mortargraph}.}
\textit{
  The mortar graph $MG$ has the following two properties.
  \begin{enumerate}[(i)]
      \item The total cost $\sum_{B\in\mathcal{B}: E(W_B) \neq \varnothing}c(W_B) + \sum_{B\in\mathcal{B}: E(E_B) \neq \varnothing}c(E_B)$ of all the non-singleton western and eastern boundaries of all bricks is bounded by $O(\eps)\cdot\opt$.
      \item The total cost $c(MG)$ of the mortar graph is $O(\eps^{-1})\cdot \opt$.
  \end{enumerate}
}
\begin{proof}

   The two properties hold with respect to the edge weights by Lemma~6.6 and Lemma~6.9 in~\cite{bkm09}. The proofs rely on the fact that the length of the boundary $\bset$ is $O(OPT)$. The same is true for node weights by Lemma~\ref{lem:cut-open}. Hence it is enough to bound the costs by the boundary cost. They show that $\sum_{B}\ell(W_B\cup E_B) = O(\eps) \ell(\bset)$ and $\ell(MG) = O(1/\eps) \ell(\bset)$.
   Since the boundary $\bset$ has at least one edge, hence $\ell(\bset) \leq 2 c(\bset)$. For any non-singleton path $P$, $\ell (P) \geq 1$, hence using Lemma~\ref{lem:length-vs-cost}, this implies that $c(P) \leq \ell (P)/2 + 1 \leq 3\ell(P)/2$.
   This observation implies that $\sum_{B\in\mathcal{B}: E(W_B) \neq \varnothing}c(W_B) + \sum_{B\in\mathcal{B}: E(E_B) \neq \varnothing}c(E_B) \leq \frac{3}{2} \sum_{B\in\mathcal{B}: E(W_B) \neq \varnothing}\ell(W_B) + \sum_{B\in\mathcal{B}: E(E_B) \neq \varnothing}\ell(E_B) = O(\eps) \ell(\bset)$.
    
    For the second part, we note that all the north, south, singleton west and singleton east boundaries are included in the strips. Hence we get the bound on the total cost $c(MG)$ by combining the bounds for strips from Lemma~\ref{lem:strips-cost} and the bound on the non-singleton west and east boundaries.
\qed
\end{proof}

\noindent \textbf{Lemma~\ref{lem:portals}.}
\textit{
Given a brick $B$ of the Mortar graph $MG$, there exists a set of vertices $P(B) \subseteq \partial B$, such that:
\begin{enumerate}
    \item (\textbf{Cardinality Property}) $|P(B)| \leq 3\theta$
    \item (\textbf{Coverage Property}) For any $u \in \partial B$, there exists $v \in P(B)$, such that $d_{\partial B}(u, v) \leq  c(\partial B)/\theta$ and $\ell_{\partial B}(u, v) \leq \ell(\partial B)/(3\theta)$
\end{enumerate}
}
\begin{proof}
  We assume $|V(\partial B)| > 3\theta$, otherwise let $P(B) = V(\partial B)$ and the lemma follows trivially. This implies that $2 c(\partial B) = |V(\partial B)| > 3\theta$.
  
  
  Using the same greedy procedure as in Step 3(a),~\cite{bkm09} on the unit edge weighted graph $B$ by ignoring the node weights and plugging in $3\theta$ we get our set of portals $P(B)$. We get the cardinality bound on $P(B) \leq 3\theta$ and $\ell_{\partial B}(u, v) \leq \ell(\partial B)/(3\theta)$ by Lemma 7.1 and 7.2,~\cite{bkm09}.
  
  Now for any vertex $u \in \partial B$, there exists $v \in P(B)$, such that $\ell_{\partial B}(u, v) \leq \ell(\partial B)/(3\theta)$.
  We get that $d_{\partial B}(u, v) \leq \ell_{\partial B}(u, v)/2 +1 \leq \ell(\partial B)/(6\theta) + 1 \leq c(\partial B)/(3\theta) + 1 \leq c(\partial B)/\theta$. The first and the second last inequalities follow from Lemma~\ref{lem:length-vs-cost}. The last inequality follows from the fact that $c(\partial B) > \frac{3}{2}\theta$.
\qed
\end{proof}

\section{Proof of the spanning property}
\label{sec:proof-spanning}

First we have to prepare the ground by stating properties and lemmas. These will be used in the proof of the spanning property which can be found at the end of this section.


\begin{definition}[$\eps$-Shortness]\label{def:eps-shortness}
A path $P$ in graph $W$ has $\eps$-shortness property in $W$ or it is $\eps$-short, if for any two vertices $u, v \in V(P)$, $d_P(u, v) \leq (1+\eps)d_W(u, v)$.
\end{definition}

According to Lemma~\ref{lem:length-vs-cost} the node weight of a path is roughly half its edge length apart from a small additive offset. Since this offset is the same for paths sharing end points, the offset does not affect the shortness property of the brick boundary (first claim of the following lemma). However, this does not hold when we consider the second claim as we compare here the node weights of a path $P$ along the south boundary with a path $P'$ connecting south with north boundary. The idea is that when setting $\eps'=\eps/4$ the path $P'$ crossing the brick must have length at least $4/\eps$ so that the impact of the offset is low. 

\begin{lemma}\label{lem:brick-prop} The boundary of any brick $B$ can be partitioned into sub-paths $W_B \cup S_B \cup E_B \cup N_B$, such that:
\begin{enumerate}
    \item $N_B$ has $0$-shortness, $S_B$ has $\eps$-shortness property in $B$.
    \item There exists a number $k \leq \kappa=\Theta(1/\epsilon^3)$ and vertices $s_0, s_1, \dots s_k$ ordered from west to east along $S_B$, such that, for any vertex $x$ of $S_B(s_i, s_{i+1})$: $d_{S_B}(x, s_i)\leq \eps d_B (x, N_B)$.
\end{enumerate}
\end{lemma}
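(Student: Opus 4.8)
\textbf{Proof plan for Lemma~\ref{lem:brick-prop}.}

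The plan is to obtain this lemma as a translation of the analogous structural statement for the mortar graph of~\cite{bkm09} (their brick decomposition properties, phrased there with respect to the unit edge-weight function $\ell$ and error parameter $\eps'$), using Lemma~\ref{lem:length-vs-cost} to move between edge length and node cost. Recall that we invoked the mortar-graph construction as a black box with $\ell(e)=1$ and $\eps'=\eps/4$. That construction already guarantees a partition of $\partial B$ into $W_B\cup S_B\cup E_B\cup N_B$ such that, with respect to $\ell$, $N_B$ is a shortest path, $S_B$ is $\eps'$-short, and there are $k\le\kappa=\Theta(1/\eps'^3)=\Theta(1/\eps^3)$ break-points $s_0,\dots,s_k$ along $S_B$ with $\ell_{S_B}(x,s_i)\le\eps'\,\ell_B(x,N_B)$ for every $x\in S_B(s_i,s_{i+1})$. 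So the task is purely to re-express ``$\eps'$-short / close w.r.t.\ $\ell$'' as ``$\eps$-short / close w.r.t.\ $d_B$''.

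For the first claim, take the sub-path $S_B$, which is $\eps'$-short w.r.t.\ $\ell$: for $u,v\in V(S_B)$ we have $\ell_{S_B}(u,v)\le(1+\eps')\,\ell_B(u,v)$. Apply Lemma~\ref{lem:length-vs-cost}(iv) to get that $S_B[u,v]$ and a $d_B$-shortest $u$--$v$ path have $\ell$-lengths that are shortest in their respective senses, then Lemma~\ref{lem:length-vs-cost}(i)--(iii): writing $\ell_{S_B}(u,v)=2d_{S_B}(u,v)+b$ and $\ell_B(u,v)=2d_B(u,v)+b$ for a common $b\in\{0,-1,-2\}$ (legitimate since both paths share the end nodes $u,v$), the $\eps'$-shortness inequality becomes $2d_{S_B}(u,v)+b\le(1+\eps')(2d_B(u,v)+b)$, i.e.\ $d_{S_B}(u,v)\le(1+\eps')d_B(u,v)+\tfrac{\eps'}{2}(2d_B(u,v)+b)\le(1+\eps')d_B(u,v)+\eps'\ell_B(u,v)/2$. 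Bounding $\ell_B(u,v)\le 2d_B(u,v)$ crudely gives $d_{S_B}(u,v)\le(1+2\eps')d_B(u,v)=(1+\eps/2)d_B(u,v)\le(1+\eps)d_B(u,v)$, which is even more than we need; for $N_B$ the same computation with $\eps'=0$ shows $N_B$ is a $d_B$-shortest path, hence $0$-short. (One has to be slightly careful that every sub-path of $N_B$ is again shortest, which is the content of Lemma~\ref{lem:length-vs-cost}(iv) applied to sub-paths.)

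For the second claim, the number $k\le\kappa=\Theta(1/\eps^3)$ and the points $s_0,\dots,s_k$ are exactly those supplied by~\cite{bkm09}; only the distance inequality has to be converted. From $\ell_{S_B}(x,s_i)\le\eps'\,\ell_B(x,N_B)$ we get, using $d_{S_B}(x,s_i)\le\ell_{S_B}(x,s_i)/2+1$ (Lemma~\ref{lem:length-vs-cost}) and $\ell_B(x,N_B)\le 2d_B(x,N_B)$,
\[
d_{S_B}(x,s_i)\le \tfrac{1}{2}\eps'\,\ell_B(x,N_B)+1\le \eps'\,d_B(x,N_B)+1.
\]
The additive $1$ is harmless: the break-point structure of~\cite{bkm09} guarantees that non-degenerate segments $S_B(s_i,s_{i+1})$ crossing the brick have $d_B(x,N_B)$ large — concretely, with $\eps'=\eps/4$ one arranges (this is already how~\cite{bkm09} sets the threshold) that whenever $x$ is not covered trivially we have $d_B(x,N_B)\ge 4/\eps$, so $1\le\tfrac{\eps}{4}d_B(x,N_B)=\eps'\,d_B(x,N_B)$ and therefore $d_{S_B}(x,s_i)\le 2\eps'\,d_B(x,N_B)=\tfrac{\eps}{2}d_B(x,N_B)\le\eps\,d_B(x,N_B)$. (For the finitely many ``short'' $x$ with $d_B(x,N_B)<4/\eps$ one simply inserts extra break-points, which changes $k$ only by an $O(1/\eps)$ additive term, still within $\Theta(1/\eps^3)$.)

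The main obstacle is exactly this last point: the additive ``$+1$'' from Lemma~\ref{lem:length-vs-cost} is not scale-invariant, so it can only be absorbed where the quantity it is compared against is bounded below, and one must check that the black-box construction of~\cite{bkm09} — which we are not allowed to re-open — does produce break-points with the needed lower bound on $d_B(\cdot,N_B)$, or else argue that adding $O(1/\eps)$ further break-points preserves both the partition structure and the $\Theta(1/\eps^3)$ bound on $k$. Everything else is the routine $\ell\leftrightarrow 2c$ bookkeeping of Lemma~\ref{lem:length-vs-cost}.
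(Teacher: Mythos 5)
Your plan matches the paper's: invoke the mortar-graph black box for unit edge weights with $\eps'=\eps/4$, then translate via Lemma~\ref{lem:length-vs-cost}. For the $\eps$-shortness of $S_B$ your algebra is a bit noisier than necessary but lands in the right place; the paper gets the cleaner bound $(1+\eps/4)d_B(u,v)$ directly from the identity $\ell(S_B[u,v])=2c(S_B[u,v])+b$ and $b\le 0$, with no need to further bound $\ell_B(u,v)\le 2d_B(u,v)$. The handling of $N_B$ via Lemma~\ref{lem:length-vs-cost}(iv) is exactly right.

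Where you leave a genuine gap is the second claim. You correctly reduce it to showing $1\le\eps'\,d_B(x,N_B)$ on each segment, but then you propose to either (a) hope that the~\cite{bkm09} construction internally enforces $d_B(x,N_B)\ge 4/\eps$, or (b) repair it by inserting $O(1/\eps)$ extra break-points. Neither is needed, and neither is what the paper does. The lower bound is already a consequence of the very inequality you imported from the black box: since $x\ne s_i$, the path $S_B[x,s_i]$ has at least one edge, so $\ell_{S_B}(x,s_i)\ge 1$; combining this with $\ell_{S_B}(x,s_i)\le(\eps/4)\ell_B(x,N_B)$ gives $\ell_B(x,N_B)\ge 4/\eps$, hence $d_B(x,N_B)\ge\ell_B(x,N_B)/2\ge 2/\eps$, so $1\le(\eps/2)d_B(x,N_B)$ and $(\eps/4)d_B(x,N_B)+1\le\eps\,d_B(x,N_B)$. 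No re-opening of the black box and no extra break-points; the $+1$ is absorbed for free. Without this observation your proof is incomplete, since you explicitly flag the lower bound as unverified.
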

\begin{proof}
   From Lemma 6.10 in~\cite{bkm09}, we know that both claims of the lemma hold w.r.t. $\ell_B$ and error parameter $\eps'=\eps/4$. We show that this implies the claimed properties also for $d_W$ and error parameter $\eps$.

   $N_B$ is a shortest path w.r.t.\ $\ell_W$. By Lemma~\ref{lem:length-vs-cost} it is also a shortest path w.r.t. $d_W$.

   Now we prove that $S_B$ has the $\eps$-shortness property. Let $u,v\in S_B$ be two distinct nodes. And let $P$ be a shortest $u$--$v$ path in $B$ w.r.t.\ to the node weights. We show that $S_B$ fulfills $\eps$-shortness w.r.t.\ $d_B$. Since $P$ and $S_B[u,v]$ share their end nodes by Lemma~\ref{lem:brick-prop} there is $b\in\{0,-1,-2\}$ such that $\ell(P)=2c(P)+b$ and $\ell(S_B[u,v])=2c(S_B[u,v])+b$
   \begin{align*}
       c(S_B[u,v]) & = (\ell(S_B[u,v])-b)/2\\
                   & \leq ((1+\eps/4)\ell(P)-b)/2\\
                   & = c(P)+\eps/8\ell(P)\\
                   & = c(P)+\eps/4(\ell(P)/2-b/2)+b\eps/8\\
                   & = (1+\eps/4)c(P)+b\eps/8\\
                   &\leq (1+\eps/4)d_B(u,v)\, .
   \end{align*}

  For the second claim, verify that
   \begin{align*}
       d_{S_B}(x,s_i) & \leq \ell_{S_B}(x,s_i)/2+1\\
                   & \leq (\eps/4)\ell_B(x,N_B)/2+1\\
                   & \leq (\eps/4) d_B(x,N_B)+1\\
                   & \leq \eps\cdot d_B(x,N_B)\, .
   \end{align*}
   
   To see the last inequality, recall that $x\neq s_i$ and hence $1\leq\ell_{S_B}(x,s_i)\leq (\eps/4)\cdot \ell_B(x,N_B)=(\eps/2)\cdot \ell_B(x,N_B)/2\leq (\eps/2)\cdot d_{B}(x,N_B)$.
\qed
\end{proof}

We use a series of lemmas to simplify any tree $T$ which lies inside a brick of the mortar graph, the simplified tree is slightly costlier by an additive factor of $\eps$. We prove several structural lemmas depending on the need to preserve some vertices for connectivity across to the other boundaries.

We start with the property called as Span 0, here we assume that we have a tree $T$ rooted at $r$ which is part of the embedded map-weighted graph $W$ and all the leaves of $T$ lie on an $\eps$-short path $P$. Also, we do not need to keep $r$ in the simplified tree. The proof is similar to edge weighted case as $\eps$-shortness suffices for a proof.

\begin{lemma}[Span $0$]
\label{lem:span0}
There is a sub-path $P' \subseteq P$ such that $c_W(P') \leq (1+\eps) c_W(T)$.
\end{lemma}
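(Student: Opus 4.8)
The plan is to mimic the edge-weighted ``Span 0'' argument from~\cite{bkm09}, replacing edge lengths by node weights and using only the $\eps$-shortness of $P$ together with part (ii) of Lemma~\ref{lem:length-vs-cost}. First I would set up notation: let $\lleaf$ and $\rleaf$ be the leftmost and rightmost leaves of $T$ along $P$ (in the order in which the leaves appear on $P$, which is well-defined since $P$ is a path and $T$ is drawn in the plane with all leaves on $P$), and let $P' := P[\lleaf,\rleaf]$ be the sub-path of $P$ between them. This is the candidate sub-path claimed by the lemma; it obviously contains all leaves of $T$ on its ends and has no ``wasted'' portions of $P$ outside the leaf span. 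It remains to bound $c_W(P')$ by $(1+\eps)c_W(T)$.

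The key estimate is that $T$ contains a walk visiting all its leaves in the order they appear on $P$, namely the usual DFS/Euler-tour-style traversal of $T$ that goes from $\lleaf$ to $\rleaf$; such a walk uses each edge of $T$ at most twice, but more carefully, since we only need to get from the leftmost to the rightmost leaf, the relevant sub-tree $T'$ spanning all leaves (the Steiner hull of the leaves in $T$, i.e. $T$ with degree-$1$ non-leaf branches pruned and then pendant paths to $r$ removed) can be traversed so that the total node weight visited, counted with multiplicity, is at most $2c_W(T)$; and in fact by contracting degree-two vertices and the standard ``short-cutting'' one gets that the leaves $\lleaf = u_0, u_1, \dots, u_m = \rleaf$ satisfy $\sum_{i} d_T(u_i, u_{i+1}) \le 2 c_W(T)$ up to the additive offsets, which by Lemma~\ref{lem:length-vs-cost}(ii) translate into $\sum_i \ell_T(u_i,u_{i+1}) \le 2\ell(T) = 2\cdot 2c_W(T)+O(1)$. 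Then, using $\eps$-shortness of $P$ on each consecutive pair, $\ell_P(u_i,u_{i+1}) \le (1+\eps)\,\ell_W(u_i,u_{i+1}) \le (1+\eps)\,\ell_T(u_i,u_{i+1})$, and summing telescopes to $\ell(P') = \sum_i \ell_P(u_i,u_{i+1}) \le (1+\eps)\sum_i \ell_T(u_i,u_{i+1})$. Converting back to costs with Lemma~\ref{lem:length-vs-cost}(ii) (and using that $P'$ and the traversal share their end nodes $\lleaf,\rleaf$, so the additive offsets cancel as in the proof of Lemma~\ref{lem:brick-prop}) yields $c_W(P') \le (1+\eps)c_W(T)$, possibly after absorbing a constant-factor slack into the $\eps$ as the paper already allows (proving property~(\ref{itm:spanning}) for $(1+c\eps)$).

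The main obstacle is making the ``$\eps$-shortness is enough'' step fully rigorous in the node-weighted world: the naive chain of inequalities picks up an additive $+1$ every time one passes through Lemma~\ref{lem:length-vs-cost}(ii), and there are potentially many consecutive leaf-pairs $u_i,u_{i+1}$, so a sloppy argument would accumulate $\Omega(m)$ extra cost rather than a multiplicative $(1+\eps)$ factor. The fix, exactly as in Lemma~\ref{lem:brick-prop}'s second display, is to do the conversion once at the level of whole paths rather than pair-by-pair: bound $\ell(P')$ by $(1+\eps)$ times the total edge-length of a single traversal walk of $T$ (not of the individual shortest paths), then apply Lemma~\ref{lem:length-vs-cost} a bounded number of times to $P'$ and to the walk, whose endpoints coincide so the $b$-offsets match. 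I would therefore organize the proof as: (1) define $P'$ via leftmost/rightmost leaves; (2) exhibit the traversal walk $Q$ of $T$ from $\lleaf$ to $\rleaf$ with $\ell(Q) \le 2\ell(T)$ and $c(Q)\le 2c_W(T)$; (3) use $\eps$-shortness edge-by-edge of $Q$ but only to compare $\ell_P$ against $\ell_W\le\ell_Q$-segments, summing to $\ell(P')\le(1+\eps)\ell(Q)$; (4) one final application of Lemma~\ref{lem:length-vs-cost}(i)--(ii) to $P'$ and $Q$ (sharing end nodes) to recover $c_W(P')\le(1+\eps)c_W(T)$ up to a constant slack in $\eps$. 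Step (2) is routine tree combinatorics and step (4) is the offset-cancellation trick already used twice in the excerpt, so the only genuinely delicate point is keeping the additive errors from proliferating in step (3).
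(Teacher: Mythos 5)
Your proposal overcomplicates a one-line argument and, in doing so, introduces a genuine gap. The source of the trouble is a misreading of Definition~\ref{def:eps-shortness}: in this paper $\eps$-shortness is already stated in node-weighted terms, namely $d_P(u,v)\le(1+\eps)\,d_W(u,v)$, not in edge-length terms $\ell_P\le(1+\eps)\ell_W$. By the time one invokes Span~$0$, the path $P$ (the south boundary $S_B$) has already been shown to be $\eps$-short with respect to $d_W$ via Lemma~\ref{lem:brick-prop}; all the $\ell$-to-$c$ conversion and offset-cancellation has been done \emph{there}, once, at the level of $S_B$. There is therefore no need to re-do any length-to-cost bookkeeping inside Span~$0$.

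Once the definition is read correctly, the paper's proof is immediate and does not require a leaf traversal at all: let $u,v$ be the two endpoints of the minimal subpath $P'\subseteq P$ containing $T\cap P$; then $c_W(P')=d_P(u,v)\le(1+\eps)\,d_W(u,v)\le(1+\eps)\,c_W(T)$, the last step because $T$ contains a $u$--$v$ path of node weight at most $c_W(T)$. Your Euler-tour step is not only unnecessary but harmful: summing $\ell_P(u_i,u_{i+1})$ over consecutive leaves telescopes correctly, but bounding $\sum_i\ell_T(u_i,u_{i+1})$ requires charging each edge of $T$ up to twice, giving $\ell(P')\lesssim 2(1+\eps)\ell(T)$ and hence $c_W(P')\lesssim 2(1+\eps)c_W(T)$. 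The paper's stated license to relax $(1+\eps)$ to $(1+c\eps)$ absorbs a constant factor \emph{on $\eps$}, not a constant factor on the whole bound, so this factor of $2$ is a real loss of correctness, not slack you can wave away. Drop the walk, use only the two extreme leaves, and apply $\eps$-shortness once directly in $d_W$.
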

\begin{proof}
  The proof follows directly from the $\eps$-shortness of $P$ by taking $P'$ to be the shortest subpath of $P'$ that spans all vertices in $T \cap P$.
\qed
\end{proof}

For any subgraph $H$ of $W$ and a path $P$ in $W$, a {\em joining vertex} is any vertex in $P$ that has an edge incident from some vertex in $H - P$.

We also have the following observation for any path in $W$. The map weights have a nice property of paths having the zero/one weights occurring alternately on them.

\begin{observation}
\label{obs:alternation_bip}
Let $W=(U\cup V, E_W)$  be any bipartite graph with weights $w(u)=1$, for each $u\in U$, and $w(v)=0$, for each $v\in V$. Then, any path in $W$ consists of vertices with weights alternating between zero and one. Moreover, if we have $d_0$ vertices with weight $0$ on any path $P$, then $c_W(P) \geq d_0-1$.
\end{observation}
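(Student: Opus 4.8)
The final statement to prove is Observation~\ref{obs:alternation_bip}, which concerns paths in a bipartite graph $W = (U \cup V, E_W)$ with weight $1$ on $U$ and weight $0$ on $V$.

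\textbf{Proof plan.}

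The plan is to exploit the defining property of bipartite graphs: every edge goes between $U$ and $V$, so consecutive vertices along any path must lie on opposite sides. First I would take an arbitrary path $P = (x_0, x_1, \dots, x_m)$ in $W$. Since $x_ix_{i+1} \in E_W$ and $W$ is bipartite with parts $U$ and $V$, the vertices alternate sides: if $x_i \in U$ then $x_{i+1} \in V$ and vice versa. Because the weight function is constant on each side ($w \equiv 1$ on $U$, $w \equiv 0$ on $V$), the sequence of weights $w(x_0), w(x_1), \dots, w(x_m)$ alternates between $1$ and $0$. This establishes the first claim immediately.

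For the second claim, suppose $P$ has $d_0$ vertices of weight $0$, i.e. $d_0$ vertices lying in $V$. By the alternation just established, between any two consecutive weight-$0$ vertices along $P$ there is at least one weight-$1$ vertex (in fact exactly one, since the pattern is strictly alternating). Hence the $d_0$ vertices from $V$ partition the remaining vertices into blocks, and there are at least $d_0 - 1$ weight-$1$ vertices appearing strictly between the first and last weight-$0$ vertex on $P$. Each such weight-$1$ vertex contributes $1$ to $c_W(P)$, so $c_W(P) \geq d_0 - 1$. I should note the edge case where $d_0 \leq 1$, for which the bound $c_W(P) \geq d_0 - 1$ is trivial since costs are nonnegative.

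I do not anticipate any real obstacle here: the statement is elementary and follows directly from bipartiteness plus the fact that weights are determined by the bipartition class. The only mild care needed is in phrasing the counting argument for the second claim precisely — making sure one counts the weight-$1$ vertices strictly interior to the span of the weight-$0$ vertices — but this is routine. Here is the proof.

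\begin{proof}
Let $P=(x_0,x_1,\dots,x_m)$ be any path in $W$. Since $W$ is bipartite with parts $U$ and $V$, every edge $x_ix_{i+1}$ joins a vertex of $U$ with a vertex of $V$; hence the vertices of $P$ lie alternately in $U$ and in $V$. As $w\equiv 1$ on $U$ and $w\equiv 0$ on $V$, the weights $w(x_0),w(x_1),\dots,w(x_m)$ alternate between $1$ and $0$, proving the first claim.

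For the second claim, assume $P$ has $d_0$ vertices of weight $0$ (equivalently, $d_0$ vertices in $V$); we may assume $d_0\geq 2$, as otherwise $c_W(P)\geq 0\geq d_0-1$ trivially. Let these weight-$0$ vertices appear on $P$ as $x_{j_1},x_{j_2},\dots,x_{j_{d_0}}$ with $j_1<j_2<\dots<j_{d_0}$. By the strict alternation established above, for each consecutive pair $x_{j_t},x_{j_{t+1}}$ there is at least one vertex of weight $1$ between them on $P$. These $d_0-1$ weight-$1$ vertices are pairwise distinct, and each contributes $1$ to $c_W(P)$. Therefore $c_W(P)\geq d_0-1$.
\qed
\end{proof}
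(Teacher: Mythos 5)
Your proof is correct. The paper itself states this observation without proof, treating it as immediate; your argument is the natural one and matches what the authors evidently had in mind: bipartiteness forces strict alternation of sides along any path, which forces strict alternation of the $0/1$ weights, and then the counting argument (at least one weight-$1$ vertex between each consecutive pair of weight-$0$ vertices, plus the trivial check for $d_0\leq 1$) gives $c_W(P)\geq d_0-1$.
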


Next we prove Span 1 which is analogous to Lemma 10.4 of~\cite{bkm09}, here we need to preserve one designated root node $r$ of the tree $T$. Note that the following proof of Span 1 does not work for general node weights. However, they work for our map weights due to a non-trivial argument using the observation stated above.  The proof requires several steps. We first prove that $T$ can be replaced with another tree whose degree is bounded by 9+6$\eps$. Next we prove that the tree can be truncated and replaced with simpler trees having bounded number of joining nodes if its height is more than $\frac{2}{\eps}+1$. Here again the proof is complicated due to presence of weights on the nodes.

\begin{lemma}[Span $1$]
\label{lem:span1}
There is another tree $T'$ such that it 1) is rooted at $r$, 2) spans all vertices in $T \cap P$, 3) has cost at most $(1 + 2\eps)c_W(T)$, and 4) has at most $2^{\OO(1/\eps)}$ joining vertices with $P$.
\end{lemma}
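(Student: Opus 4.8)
The plan is to adapt the proof of Lemma~10.4 of~\cite{bkm09} in two stages, following the same high-level structure but inserting extra arguments that exploit the map-weighted structure (Observation~\ref{obs:alternation_bip}). First I would reduce to a tree of bounded degree, then truncate the tree to bounded height, and finally bound the number of joining vertices in terms of the height and the degree.

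\textbf{Step 1: bounding the degree.} First I would show that $T$ can be replaced by a tree $T_1$, still rooted at $r$ and spanning $T\cap P$, of cost at most $(1+\eps)c_W(T)$ in which every vertex has degree at most $9+6\eps$ (say). The idea, as in~\cite{bkm09}, is that a high-degree branch vertex $x$ can be ``split'': we take several copies of $x$ joined along a short path, distributing the subtrees hanging off $x$ among the copies. In the edge-weighted case this costs nothing extra; here each extra copy of $x$ has weight at most $1$, so I must argue that the number of extra copies introduced over the whole tree is bounded by $O(\eps\cdot c_W(T))$. This is where the map-weighted structure enters: by Observation~\ref{obs:alternation_bip}, the number of weight-$0$ vertices on any root-to-leaf path is at most $c_W(T)+1$, and in a bipartite map-weighted graph a vertex of weight $1$ has all its neighbours of weight $0$; so the number of branch vertices, and hence the total number of copies that a careful splitting scheme creates, can be charged against $c_W(T)$. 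I would make the splitting operation only at vertices whose subtree is ``large'' (cost at least, say, $1/\eps$) so that the $+1$ per copy is absorbed, exactly the standard trick; vertices with small subtrees get handled in Step 2.

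\textbf{Step 2: bounding the height, and counting joining vertices.} Next, working with the bounded-degree tree $T_1$, I would repeatedly truncate subtrees of height greater than $2/\eps+1$. When a subtree $T_x$ rooted at $x$ has many vertices but small height, its leaves all lie on $P$ and are contained in a short window of $P$ (by $\eps$-shortness of $P$ and Observation~\ref{clm:strict-paths}); we can then replace $T_x$ by a shortest subpath of $P$ spanning those leaves, plus the edge from $x$'s parent down to one endpoint, à la Span~$0$ (Lemma~\ref{lem:span0}), at a multiplicative cost of $(1+\eps)$. After this, every remaining ``non-path'' portion of the tree has height $O(1/\eps)$ and degree $O(1)$, so it contains $2^{\OO(1/\eps)}$ vertices, in particular $2^{\OO(1/\eps)}$ joining vertices with $P$; the path portions contribute joining vertices only at their endpoints, which I also need to bound --- again by the height/degree bound on where path portions can attach. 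Summing the multiplicative losses gives total cost $(1+\eps)(1+\eps)c_W(T)\le(1+2\eps)c_W(T)$ after renaming $\eps$, and $r$ is preserved throughout since truncation never touches the root or the path from the root.

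\textbf{Main obstacle.} The hard part will be Step~1: in~\cite{bkm09} the degree reduction is essentially free, whereas here each copy of a split vertex costs up to $1$, and naively one could need $\Omega(n)$ copies. The whole point is to show that a good splitting strategy --- splitting only at vertices with heavy subtrees and spreading cleverly --- keeps the total number of extra copies at $O(\eps\cdot c_W(T))$, and this genuinely relies on the bipartite $\{0,1\}$-weight structure (Observation~\ref{obs:alternation_bip}) rather than on general node weights, which is exactly why the lemma is false in the general node-weighted setting. A secondary subtlety is interleaving Steps~1 and~2 correctly so that the degree bound is not destroyed by the path-replacement in Step~2 and vice versa; I would phrase both as a single potential-decreasing process, or simply apply Step~1, then Step~2, checking that Step~2 only deletes vertices and adds sub-paths of $P$ (whose internal vertices have degree $2$), hence cannot increase the maximum degree.
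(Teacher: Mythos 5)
Your two-stage plan (degree reduction, then height truncation and a $\Delta^{\OO(1/\eps)}$ count) is the right shape and matches the paper's outline, but both stages as you describe them have genuine gaps, and the first one takes a route the paper deliberately avoids because it does not work.

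\textbf{Step 1 (degree reduction).} You propose splitting a high-degree vertex $x$ into several copies and then charging the total number of copies to $O(\eps)\cdot c_W(T)$ by observing that copies live only at branch vertices. This charging cannot succeed. Consider a weight-$1$ vertex $x$ whose $d$ children are all weight-$0$ leaves on $P$: then $c_W(T_x)=1$ independently of $d$, yet any splitting that brings the degree down to a constant $\Delta$ produces $\Omega(d/\Delta)$ new weight-$1$ copies, an additive cost of $\Omega(d)$ that cannot be charged to anything of size $O(\eps\,c_W(T))$. Restricting the splitting to vertices with ``large'' subtrees does not help either, since the example above is exactly a vertex with a tiny subtree and huge degree, and Step~2 as you state it does not repair it. The paper's proof instead never splits: it replaces the whole subtree $T_x$ by a spine $S_x$ together with the spanned subpath $P_x$, and handles the dangerous case $w(x)=1$ by the genuinely map-weighted argument --- Observation~\ref{obs:alternation_bip} plus the $\eps$-shortness of $P$ imply that the landing points of $x$'s branches lie in a window of cost at most $3(1+\eps)$, which caps the number of such branches and forces $\deg(x)\leq 9+6\eps$. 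In other words, when one cannot charge, the degree is already small, so no lossy operation is ever performed.

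\textbf{Step 2 (height truncation).} You truncate every subtree rooted below height $2/\eps+1$ and replace it by a subpath of $P$ plus one connecting arc. This loses control of the additive cost: for each vertex $v$ at the cut level you must pay for the extra weight of $v$ and of the top of its spine, and summing over all such $v$ this is the full cost of an entire tree level, not $O(\eps)\cdot c_W(T)$. The paper avoids this by first removing degree-$2$ vertices (so that levels strictly branch) and then choosing, by an averaging argument over the first $\lceil 2/\eps\rceil$ levels, a level $j^*$ with $Y_{j^*}+Y_{j^*+1}\le\eps\cdot c_W(T''')$; only there is the truncation performed, so the double-counted weight is within budget. Without this choice of level your cost bound $(1+2\eps)c_W(T)$ does not follow. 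The joining-vertex count $\Delta^{\OO(1/\eps)}=2^{\OO(1/\eps)}$ is correct once both of these ingredients are in place.
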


\noindent\textit{Proof.} 
First we prove the intermediate claim below. 

\noindent \textit{Claim.} There is a tree $T''$ which is rooted at $r$, each node has at most $\Delta = 9+6\eps$ children and has cost $c_W(T'') \leq (1+\eps) c_W(T)$. 

\noindent\textit{Proof of Claim.} 

\begin{wrapfigure}{R}{0.5\textwidth}
\centering
\includegraphics[width=0.48\textwidth]{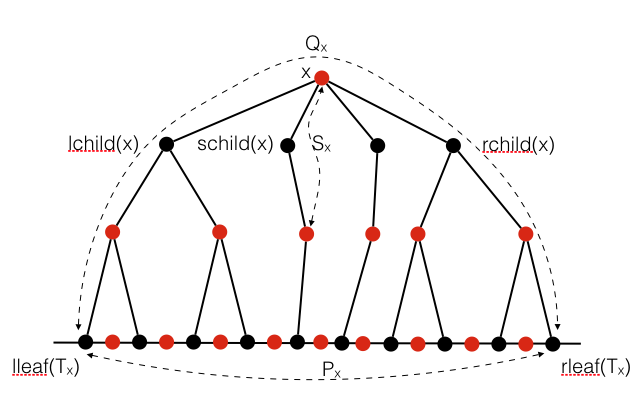}
\caption{ 
\footnotesize Illustration of definitions with respect to a node $x$ on the graph $T_x \cup P_x$. The red vertices have weight one, while the black vertices have weight zero. }
\label{fig:span1}
\end{wrapfigure}

We start with some definitions. For any node $x\in V(T)$, we use $T_x$ to denote the subtree rooted at $x$, and we refer to its leftmost child as ${\rm lchild}(x)$, and its rightmost child is referred to as ${\rm rchild}(x)$. The leftmost leaf of any tree $T$ is referred to as $\lleaf(T)$, while the rightmost leaf is referred to as $\rleaf(T)$. The subpath of $P$ from the $\lleaf(T_x)$ to $\rleaf(T_x)$ is referred to as $P_x$. If $x$ has at least 3 children, then $S_x$ is defined to be the shortest path from $x$ to $P_x$, excluding the vertex on $P_x$, which avoids ${\rm rchild}(x)$, and ${\rm lchild}(x)$.  Let ${\rm schild}(x)$ denote the child of $x$ on $S_x$. 
Let $Q_x$ be the path from $\lleaf(T_x)$ to $\rleaf(T_x)$ which passes through $x$, as shown in Figure~\ref{fig:span1}.

Assume $x$ has at least $\Delta +1$ children. We show that the tree $T''_x$ consisting of $S_x\cup P_x$ has weight at most $(1+\eps)c_W(T_x)$. We first consider the case when $w(x)=0$. Due to $\eps$-shortness, we have $c_W(P_x) \leq (1+\eps) c_W(Q_x) $. As $x$ has cost zero, it does not contribute anything extra when we add the path $S_x$ to $P_x$, i.e. $c_W(T''_x) = c_W(S_x) + c_W(P_x) = c_W(S_x) + (1+\eps)c_W(Q_x) \leq (1+\eps)c_W(T_x) + w(x)$. Therefore, $T''_x$ has the claimed cost.

We next consider the case when $w(x)=1$. Clearly, if $T \setminus (P_x\cup Q_x\cup S_x)$ has a node $v$ with $w(v)=1$, then
\begin{align*}
c_W(T''_x) =& c_W(S_x) + c_W(P_x)       \\
\leq & c_W(S_x) + (1+\eps) c_W(Q_x)\\
\leq & (1+\eps)c_W(T_x) - w(v) + w(x)\\
\leq & (1+\eps)c_W(T_x).
\end{align*}
Hence, let us assume that every path from $x$ to $P_x$ avoiding ${\rm lchild}(x)$ and ${\rm rchild}(x)$ has no weight one node on it. Due to Observation~\ref{obs:alternation_bip}, there are exactly two types of such paths, either $(a)$ $x \leftrightarrow p$, where $p\in P$ with $w(p)=0$, or $(b)$ $x \leftrightarrow v \leftrightarrow p$, where $p\in P$ with $w(p)=1$, and $w(v)=0$.
Let the number of paths of type-$a$ be $d_0$, the number of paths of type-$b$ be $d_1$, and let $d$ denote the number of children of $x$. Clearly, we have $d_0 + d_1 = d-2$.
Let $t_L$ be the $2^{nd}$ child of $x$ from the left, and let $t_R$ be the $2^{nd}$ child of $x$ from the right.
Let us denote the path in $T_x$ from 
$\lleaf(T_{t_L})$
to $\rleaf(T_{t_R)}$ by $Q'_x$.
Also, let $P'_x$ denote the subpath of $P_x$ from 
$\lleaf(T_{t_L})$
to $\rleaf(T_{t_R)}$.
As $Q'_x$ can have weight at most $3$, due to $\eps$-shortness, we have $c_W(P'_x) \leq 3\cdot (1+\eps)$. Notice that due to Observation~\ref{obs:alternation_bip}, we have $d_0 - 1 \leq c_W(P'_x) $, combining it with $d_1 \leq c_W(P'_x)$, we get $d-3 \leq 6+6\eps$. Therefore, the number of children of $x$ is at most $9+6\eps$.

 To get the tree as claimed earlier, we start at the root vertex $r$ of $T$, processing the subtrees recursively, and replace any vertex $v$ with more than $\Delta$ children with $T''_v$. Therefore every vertex after the replacement has bounded degree. This finishes the proof of the claim.
 \qed
 
Let $T''$ be the tree obtained in the last paragraph. We first get rid of degree two vertices from the graph. For any vertex $v \in T''$ with parent $p$ and only one child $c$, replace the path $p \leftrightarrow v \leftrightarrow c$ by the edge $(p,c)$, and modify the weight of $p$ to be $w(p) := w(p) + w(v)$. On applying this rule exhaustively we end up with a tree $T'''$ in which the degree of each vertex is at least $3$, in other words each node has at least $2$ children. Next, we show that if the tree $T'''$ is `too tall', then the tree has two cheap consecutive levels, which we can always include in a solution.

Let the height of the tree $T'''$ be $\ell$, with root at level $1$. We denote the nodes on a level $i \in [\ell]$ by $X_i$, and the sum of weight of nodes in $X_i$ is denoted using $Y_i$. If $\ell \leq \frac{2}{\eps}+1$, then the tree $T'''$ has at most $\Delta^{\frac{2}{\eps}+1}$ joining nodes and the tree $T'''$ satisfies the required properties.
Otherwise, there exists some level $j^* \in [\lceil \frac{2}{\eps} \rceil]$, such that $Y_{j^*}+Y_{j^*+1} \leq \eps c_W(T''')$. This follows from the averaging argument, as we have $\sum_{i \in [\lceil \frac{2}{\eps}\rceil]} \frac{Y_i + Y_{i+1}}{\frac{2}{\eps}} \leq \frac{2c_W(T''')}{\frac{2}{\eps}} \leq \eps c_W(T''')$. 
Therefore, the total cost of the nodes in $X_{j^*} \cup X_{j^*+1}$ is very small, and we can simply include them in a solution. 
For any $v\in V(T)$, let $S'_v$ be a path from $v$ to $P_v$, excluding the vertex on $P_v$, which passes through ${\rm lchild}(v)$. Note that $S'_v$ is disjoint from $Q_v$ except at $v$, and ${\rm lchild}(v)$.
For each $v \in X_{j^*}$, we replace $T'''_v$ with the tree ${\bot}_v =  P_v \cup S'_v$. Let the tree obtained from the operation above be $T'''_{\bot}$, we have 
\begin{align*}
c_W(T'''_{\bot}) = & \sum_{i \leq {j^*}-1} Y_i + \sum_{v \in X_{j^*}} c_W(\bot_v) \\
= & \sum_{i \leq {j^*}-1} Y_i + \sum_{v \in X_{j^*}} (c_W(P_v)+c_W(S'_v)) \\
\leq & \sum_{i \leq {j^*}-1} Y_i + \sum_{v \in X_{j^*}} ((1+\eps)c_W(Q_v)+c_W(S'_v)) \\
\leq & \sum_{i \leq {j^*}-1} Y_i + \sum_{v \in X_{j^*}} ((1+\eps)c_W(T_v) + w(v) + w({\rm lchild}(v)) \\
\leq & (1+2\eps)c_W(T''').
\end{align*}
Therefore, $T'''_{\perp}$ satisfies the required properties. This completes the proof of the lemma.
\qed

Finally we prove the property called as Span 2, here we need to preserve two designated vertices in the resulting simplified tree. We do not provide a proof as the following lemma follows in an analogous manner to the edge weighted case.

\begin{lemma}[Span 2, Lemma 10.6,~\cite{bkm09}]
\label{lem:span2}
Let $r,s \in V(T)$. There is another tree $T'$ such that it 1) is rooted at $r$, 2) spans all vertices in $\{r, s\} \cup (T \cap P)$, 3) has cost at most $(1 + c\eps)c_W(T)$, and 4) has at most $\OO(\frac{\rho_\eps}{\eps})$ joining vertices with $P$, where $\rho_\eps$ is the number of joining nodes due to application of Span $1$ in Lemma~\ref{lem:span1}.
\end{lemma}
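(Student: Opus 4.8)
The plan is to reduce the two-terminal case to the one-terminal case (Span~1) by ``folding'' the second designated vertex $s$ into the root structure. First I would consider the path $Q$ in $T$ from the root $r$ to the designated vertex $s$; this path partitions $T$ into $Q$ itself together with a collection of subtrees $T_1, T_2, \dots$ hanging off the vertices of $Q$, each $T_j$ rooted at the vertex $x_j \in V(Q)$ where it attaches. The cost $c_W(T)$ splits (up to the shared vertices on $Q$, whose weights are at most $1$ each) as $c_W(Q) + \sum_j c_W(T_j)$. We keep the entire path $Q$ in the output tree $T'$, which costs $c_W(Q)$ and guarantees that both $r$ and $s$ are spanned and that $r$ remains the root; the vertices of $Q$ that already lie on $P$ are joining vertices, but $|V(Q)\cap P|$ can be bounded because $P$ is $\eps$-short and $Q\subseteq T$ lies inside the brick, so after possibly rerouting $Q$ along $P$ between its first and last intersections with $P$ we may assume $Q$ meets $P$ in $O(1/\eps)$ vertices (this is exactly where a constant-vs-$1/\eps$ accounting matters, and where the analogous edge-weighted argument of Lemma~10.6 of~\cite{bkm09} is invoked).

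Next I would apply Span~1 (Lemma~\ref{lem:span1}) to each hanging subtree $T_j$ separately, treating $x_j$ as the root $r$ of that invocation: this yields trees $T_j'$ rooted at $x_j$, spanning $T_j\cap P$, of cost at most $(1+2\eps)c_W(T_j)$, and with at most $\rho_\eps = 2^{O(1/\eps)}$ joining vertices each. The output tree is $T' := Q \cup \bigcup_j T_j'$. Summing, $c_W(T') \le c_W(Q) + \sum_j (1+2\eps)c_W(T_j) \le (1+2\eps)c_W(T) + O(|V(Q)|)$, and the additive term from shared vertices of $Q$ is again at most $|V(Q)|$, which must be absorbed into the $(1+c\eps)$ factor; this forces us to first ensure $c_W(Q)$ is not too small relative to $c_W(T)$ — if it is, we can afford to buy all of $Q$ outright since it is cheap. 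The total number of joining vertices is $|V(Q)\cap P| + \sum_j (\text{joining vertices of } T_j')$, and here the subtlety is that the number of subtrees $T_j$ that actually reach $P$ — equivalently, the number of indices $j$ with $T_j' \neq \emptyset$ in the relevant sense — need not be constant. This is bounded by observing that each such $T_j$ contributes a vertex of $P$ reachable within the brick, and by a shortness/alternation argument (Observation~\ref{obs:alternation_bip}) the number of ``productive'' attachment points along $Q$ is $O(1/\eps)$ times a cheap-level argument as in Span~1, giving the claimed bound of $O(\rho_\eps/\eps)$ joining vertices.

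The main obstacle I expect is the last point: controlling the number of hanging subtrees that interact with $P$ so that multiplying through by $\rho_\eps$ still gives only $O(\rho_\eps/\eps)$ and not something depending on the size of $T$. In the edge-weighted setting this is handled by a truncation-by-levels argument along $Q$ (buy two consecutive cheap ``levels'' of the decomposition of $T$ relative to $Q$, found by averaging, then recurse/replace below them using the inverted-$\perp$ construction from Span~1). Carrying that over requires the map-weighted versions of the structural facts already established — $\eps$-shortness of $P$, the degree bound $\Delta = 9 + 6\eps$, and the alternation Observation~\ref{obs:alternation_bip} that lets us pass between node weights and edge lengths with only an additive constant — all of which are available, so the argument goes through exactly as in Lemma~10.6 of~\cite{bkm09} with $\eps'$ replaced by $\eps$ and costs $\ell(\cdot)$ replaced by $c_W(\cdot)$; this is why we state it without a separate proof.
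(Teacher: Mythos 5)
The paper gives no proof of Span~2; it explicitly writes that it does not provide a proof and that the lemma ``follows in an analogous manner to the edge weighted case,'' so there is nothing in the paper to compare against beyond the appeal to Lemma~10.6 of~\cite{bkm09}. Your sketch follows the shape of that lemma's proof --- decompose $T$ along the $r$--$s$ path $Q$, keep $Q$, and reduce the hanging subtrees to the one-root case via Span~1 --- which is indeed the intended route.

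That said, two of your steps are not sound as written. The claim that ``after possibly rerouting $Q$ along $P$ between its first and last intersections with $P$ we may assume $Q$ meets $P$ in $O(1/\eps)$ vertices'' does not hold: rerouting $Q$ onto a stretch of $P$ makes $|V(Q)\cap P|$ large, not small, and moreover it disconnects the hanging subtrees anchored on the replaced part of $Q$; what one actually needs to bound is the number of departures of $T'$ from $P$, a different quantity. More importantly, the target count $O(\rho_\eps/\eps)$ forces at most $O(1/\eps)$ invocations of Span~1, not one per hanging subtree. You flag this as the main obstacle but then resolve it only by asserting that the BKM level-truncation argument along $Q$ ``goes through exactly'' --- and this is precisely the step that would have to be verified in the node-weighted setting, the same step the paper itself leaves unverified by omitting the proof. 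Finally, your worry about an additive $O(|V(Q)|)$ term in the cost bound is a red herring: if you group the hanging subtrees by attachment point $x\in V(Q)$ so that the resulting trees $T_x$ are pairwise vertex-disjoint, the double-counted weight is $\sum_x w(x)\leq c_W(Q)$ and $c_W(T')\leq(1+O(\eps))\,c_W(T)$ follows without any case split on $c_W(Q)$. In short, the approach matches the paper's stated intent, but the concrete bound on joining vertices --- the actual content of Span~2 --- is supplied by neither the sketch nor the paper.
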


Now we have all the ingredients to conclude the following structural property of a brick $B$. The proof structure for this structural lemma is exactly same as the proof for Theorem 10.7,~\cite{bkm09}. We do the same decomposition by just plugging-in Lemma~\ref{lem:brick-prop},~\ref{lem:portals},~\ref{lem:span0},~\ref{lem:span1} and~\ref{lem:span2} instead of Lemma 6.10, 7.1, 7.2, 10.2, 10.4 and 10.6 respectively.

\begin{lemma}[Brick-Structural Theorem 10.7,~\cite{bkm09}]
\label{thm:brick-structural}
Let $B$ be a map-weighted plane graph with boundary $N\cup E \cup S \cup W$ satisfying lemma~\ref{lem:brick-prop}. Let $F$ be a forest in $B$ then there is another forest $\Tilde{F}$ in $B$ satisfying the following properties:
\begin{enumerate}
    \item Any two vertices of $N \cup S$ connected in $F$ are also connected in $\Tilde{F}$.
    \item The number of joining vertices of $\Tilde{F}$ with $N \cup S$ is at most $g(\eps)$.
    \item $c(\Tilde{F}) \leq (1+c\eps)c(F)$.
\end{enumerate}
where $g(\eps) = \OO(2^{1/\eps})$\footnote{We get worse factor of $\OO(2^{1/\eps})$ because of Lemma~\ref{lem:span2}. There is a more involved argument to get a bound polynomial in $1/\eps$ which we skip for simplicity.} and $c$ is some fixed constant.
\end{lemma}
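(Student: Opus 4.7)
The plan is to follow the decomposition strategy in the proof of Theorem~10.7 of~\cite{bkm09}, substituting the node-weighted analogues of its ingredients: the brick geometry of Lemma~\ref{lem:brick-prop}, the portal set of Lemma~\ref{lem:portals}, and the Span lemmas~\ref{lem:span0}--\ref{lem:span2} play the roles of Lemmas~6.10,~7.1,~7.2,~10.2,~10.4 and~10.6 of~\cite{bkm09}. As a preprocessing step I would discard every component of $F$ that does not meet $N\cup S$, since it contributes nothing to the connectivity condition and can only make the cost smaller, and then root each remaining tree at a canonical vertex on $N\cup S$.

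Next I would use the split points $s_0,\ldots,s_k$ on $S$ (with $k\le \kappa=\Theta(1/\eps^3)$) from Lemma~\ref{lem:brick-prop} to partition each rooted tree into subtrees classified by the horizontal segment of $S$ they occupy and by whether they reach $N$. For a maximal subtree whose leaves are confined to a single segment $S[s_i,s_{i+1}]$ and which remains bounded away from $N$, the bound $d_{S}(x,s_i)\le \eps\cdot d_B(x,N)$ from Lemma~\ref{lem:brick-prop} together with $\eps$-shortness of $S$ lets me invoke Span~$0$ (Lemma~\ref{lem:span0}) to replace the subtree by a subpath of $S$ at a multiplicative cost of $(1+\eps)$. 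Subtrees spanning several south segments but not reaching $N$ are replaced using Span~$1$ (Lemma~\ref{lem:span1}) at their root, whereas subtrees with leaves on both $N$ and $S$ are replaced using Span~$2$ (Lemma~\ref{lem:span2}) at an appropriately chosen pair of designated boundary vertices. A symmetric argument handles the partitioning induced by the (shortest) north boundary $N$.

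Since the decomposition produces $O(\kappa)=O(1/\eps^3)$ Span invocations in total, and the worst-case bound of $2^{O(1/\eps)}$ joining vertices per invocation comes from Span~$2$, the total number of joining vertices of $\tilde F$ with $N\cup S$ is $g(\eps)=O(2^{1/\eps})$, matching the claim. Each Span application inflates the cost by a factor $(1+O(\eps))$ on disjoint parts of $F$, so composing them across the $O(1/\eps^3)$ pieces yields overall cost $(1+c\eps)c(F)$ for a suitable constant $c$.

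The main obstacle, which is already absorbed by the Span lemmas themselves, is the additive $+1$ gap between node-weighted and edge-weighted distances from Lemma~\ref{lem:length-vs-cost}; a naive translation from the edge-weighted setting would destroy both the $\eps$-shortness of $S$ and the level-averaging argument underlying Span~$1$. The bipartite map structure captured by Observation~\ref{obs:alternation_bip} is exactly what allows those offsets to be charged away. Once that is in hand, the remainder is bookkeeping of the case analysis already present in~\cite{bkm09}.
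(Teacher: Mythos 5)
Your proposal mirrors the paper's own proof exactly: the paper likewise observes that the decomposition of Theorem~10.7 in~\cite{bkm09} can be run verbatim, plugging in Lemma~\ref{lem:brick-prop} for the brick geometry, Lemma~\ref{lem:portals} for the portals, and Lemmas~\ref{lem:span0}--\ref{lem:span2} for Span~0/1/2, with the exponential $g(\eps)$ inherited from the weaker joining-vertex bound of the node-weighted Span lemmas. Your elaboration of which Span lemma handles which type of subtree and why the per-piece $(1+O(\eps))$ factors compose on disjoint parts is a faithful unpacking of the paper's terse ``same decomposition, substituted ingredients'' argument.
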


Now we are ready to sketch the proof of the spanning property.
\begin{proof}[Proof of the spanning property]
  Consider the optimum solution to the input instance $F = OPT(W, R)$. Decompose the edges of $F$ into forests as follows: for each brick $B$, let $F_B$ be the set of edges of $F$ strictly enclosed by the boundary $\partial B$ of this brick. Let $F_M$ be the remaining edges of $F$ which lie on the mortar graph $MG$.

  For brick $B$, let now $F'_B$ be the minimal forest of $F_B \cup E \cup W$. Apply Lemma~\ref{thm:brick-structural} to the brick $B$ and $F'_B$. This results in $\Tilde{F_B}$ which has a constant number of joining vertices. Now, let $D_B$, be the set of subpaths of $\partial B$ from each joining vertex to its closest portal in $P(B)$. Consider now any connected component $T$ of $\Tilde{F_B} \cup D_B$. For all such $T$, replace it with the optimum Steiner tree $T^*$ spanning the same set of portals as $T$. Let $F^*_B$ be the resulting forest which includes trees $T^*$ and additionally paths $D_B$.
  Finally, let $F^*$ be the union over all forests $F^*_B$ and $F_M$.

  By construction $F^* \subseteq H$ as it consists only of mortar edges and optimum Steiner trees. We claim that $F^*$ is a solution to the instance $W$ of cost at most $(1+\eps)OPT(W, R)$. This follows from Lemma~\ref{thm:brick-structural} in a similar way that the proof of Theorem 3.2~\cite{bkm09}. A more detailed analysis is left for the full paper.
\qed
\end{proof}

\end{document}